\documentclass[aps,preprint]{revtex4}%
\usepackage{amsfonts}
\usepackage{amsmath}
\usepackage{amssymb}
\usepackage{graphicx}%
\setcounter{MaxMatrixCols}{30}

\providecommand{\U}[1]{\protect\rule{.1in}{.1in}}

\newtheorem{theorem}{Theorem}

\newtheorem{definition}[theorem]{Definition}

\newenvironment{proof}[1][Proof]{\noindent\textbf{#1.} }{\ \rule{0.5em}{0.5em}}
\begin{document}
%\preprint{ }
\title{S-Expansion of Higher-Order Lie Algebras}
\author{Ricardo Caroca$^{1,2}$}
\email{rcaroca@ucsc.cl}
\author{Nelson Merino$^{1}$}
\email{nemerino@udec.cl}
\author{Patricio Salgado$^{1}$}
\email{pasalgad@udec.cl}
\date{\today}
\begin{abstract}
By means of a generalization of the S-expansion method we construct a
procedure to obtain expanded higher-order Lie algebras. It is shown that the
direct product between an Abelian semigroup S and a higher-order Lie algebra
($\mathcal{G},\left[  ,...,\right]  )$ is also a higher-order Lie algebra.
\ From this S-expanded Lie algebra are obtained resonant submultialgebras and
reduced multialgebras of a resonant submultialgebra.

\end{abstract}
%\date{}
\affiliation{$^{1}$Departamento de F\'{\i}sica, Universidad de Concepci\'{o}n, Casilla 160-C, Concepci\'{o}n, Chile\\
$^{2}$Departamento de Matematica y F\'{\i}sica Aplicadas, Universidad Cat\'{o}lica
de la Santisima Concepci\'{o}n, Alonso de Rivera 2850, Concepci\'on, Chile }
\maketitle
\tableofcontents

\section{Introduction}

Higher-order (or multibracket) simple Lie algebras \cite{deazcarraga},
\cite{deazcarraga01}, \cite{deazcarraga0} are generalized ordinary Lie
algebras. Their structure constants are given by Lie algebra cohomology
\ cocycles which, by virtue of being such, satisfy a suitable generalization
of the Jacobi identity. \bigskip

As is noted in ref \cite{deazcarraga}, \cite{deazcarraga0} it could be
interesting to find applications of these higher-order Lie algebras to know
whether the cohomological restrictions which determine and condition their
existence have a physical significance. \ Lie algebra cohomology arguments
have already been very useful in various physical problems as in the
description of anomalies or in the construction of the Wess-Zumino terms
required in the action of extended supersymmetric objects. \ Other questions
may be posed from a purely mathematical point of view. From the discussion in
Sect.4 of ref. \cite{deazcarraga} we know that a representation of a simple
Lie algebra may not be a representation for the associated higher-order Lie
algebras. Thus, the representation theory of higher-order algebras requires a
separate analysis. A very interesting open problem from a structural point of
view is the expansions of higher-order Lie algebras, which will take us
outside the domain of the simple ones.

The purpose of this paper is to show that the S-expansion method developed in
ref. \cite{sexpansion} (see also \cite{deazcarraga1},\cite{hatsuda},
\cite{deazcarraga2}) can be generalized so that it permits obtaining expanded
higher-order Lie algebras.

The paper is organized as follows: In section 2 we shall review some aspects
of higher-order Lie algebras. The main point of this section is to display the
differences between ordinary Lie algebras and higher-order Lie algebras and to
generalize the definitions of higher-order Lie subalgebras and higher-order
reduced Lie algebras. In section 3 we generalize the S-expansion method and we
show that it is possible to obtain higher-order expanded Lie algebras. In
section 4 \ is shown that, under determined conditions, relevant higher-order
Lie subalgebras can be extracted from the S-expanded higher-order Lie algebras.

\section{ Higher-order Lie algebras}

In this section we shall review some aspects of higher-order Lie algebras. The
main point of this section is to display the differences between ordinary Lie
algebras and higher-order Lie algebras and to generalize the concepts of
subalgebra and reduced Lie algebra of ref. \cite{sexpansion}.

\begin{definition}
An algebra is defined as a pair $\left(  G,\bullet\right)  $ where $G$ is a
finite dimensional vector space,\ and $\ \bullet:G\times G\rightarrow G$ is a
rule of composition defined over the vector space.
\end{definition}

\begin{definition}
A Lie algebra $\mathcal{G}$ is defined by the pair $\left(  G,\left[
,\right]  \right)  $ where $G$ is a finite dimensional vector space , with
basis $\left\{  T_{A}\right\}  _{A=1}^{\dim G}$, over the field $K$ of real or
complex numbers; and $\left[  ,\right]  $ is a rule of composition $\left(
T_{A_{1}},T_{A_{2}}\right)  \rightarrow\left[  T_{A_{1}},T_{A_{2}}\right]  \in
G$ which satisfies the following axioms:
\end{definition}

\begin{itemize}
\item $\left[  \alpha T_{A_{1}}+\beta T_{A_{2}},T_{A_{3}}\right]
=\alpha\left[  T_{A_{1}},T_{A_{3}}\right]  +\beta\left[  T_{A_{2}},T_{A_{3}%
}\right]  $ \textit{for} $\alpha,\beta\in K$ \ \textit{\ (linearity),}

\item $\left[  T_{A_{1}},T_{A_{2}}\right]  =-\left[  T_{A_{2}},T_{A_{1}%
}\right]  $ \ \ \ \ \ \ \ \ $\mathit{\forall}$ $T_{A_{1}},T_{A_{2}}\in G$
\textit{\ \ (antisymmetry),}

\item $\left[  \left[  T_{A_{1}},T_{A_{2}}\right]  ,T_{A_{3}}\right]  +\left[
\left[  T_{A_{2}},T_{A_{3}}\right]  ,T_{A_{1}}\right]  +\left[  \left[
T_{A_{3}},T_{A_{1}}\right]  ,T_{A_{2}}\right]  =0$,

for all $T_{A_{1}},T_{A_{2}},T_{A_{3}}\in G$\ \textit{(Jacobi identity).}
\end{itemize}

The Jacobi identity (JI) can be re-written
\begin{equation}
\frac{1}{1!}\frac{1}{2!}%
%TCIMACRO{\dsum \limits_{\sigma\in S_{3}}}%
%BeginExpansion
{\displaystyle\sum\limits_{\sigma\in S_{3}}}
%EndExpansion
\left(  -1\right)  ^{\pi\left(  \sigma\right)  }\left[  \left[  T_{A_{\sigma
\left(  1\right)  }},T_{A_{\sigma\left(  2\right)  }}\right]  ,T_{A_{\sigma
\left(  3\right)  }}\right]  =0. \label{i1}%
\end{equation}
where $S_{3}$ is the permutation group of three elements and $\pi\left(
\sigma\right)  $ is the parity of the permutation $\sigma$.

\begin{definition}
Let $\mathcal{G}$\ be a Lie algebra. A n-bracket $\left[  ,...,\right]  $ or
skew-symmetric Lie multibracket is a Lie algebra valued n-linear
skew-symmetric mapping $\left[  ,...,\right]  :\mathcal{G}\overset{n}%
{\times...\times}\mathcal{G}\rightarrow\mathcal{G}$,
\begin{equation}
\left(  T_{A_{1}},...,T_{A_{n}}\right)  \rightarrow\left[  T_{A_{1}%
},...,T_{A_{n}}\right]  =C_{A_{1}...A_{n}}^{B}T_{B} \label{m1}%
\end{equation}
where the constants $C_{A_{1}...A_{n}}^{B}$ are called higher-order structure
constants which are completely antisymmetric in the indices $A_{1}...A_{n}$.
\end{definition}

To define higher-order Lie algebras we need to find the generalization of the
Jacobi identity. We postulate that the generalization of the left hand side of
eq. (\ref{i1}) is given by%
\begin{equation}
\frac{1}{\left(  n-1\right)  !}\frac{1}{n!}%
%TCIMACRO{\dsum \limits_{\sigma\in S_{2n-1}}}%
%BeginExpansion
{\displaystyle\sum\limits_{\sigma\in S_{2n-1}}}
%EndExpansion
\left(  -1\right)  ^{\pi\left(  \sigma\right)  }\left[  \left[  T_{A_{\sigma
\left(  1\right)  }},...,T_{A_{\sigma\left(  n\right)  }}\right]
,T_{A_{\sigma\left(  n+1\right)  }},...,T_{A_{\sigma\left(  2n-1\right)  }%
}\right]  \label{mm0}%
\end{equation}
However we must find the conditions under which is possible the vanishing of
the right hand side. Let $T_{A}$ be the basis of the algebra in a
representation of $\mathcal{G}$. \ Then is possible to realize the
multibracket as%

\begin{align}
\left[  T_{A_{1}},...,T_{A_{n}}\right]   &  =\varepsilon_{A_{1}...A_{n}%
}^{B_{1}...B_{n}}T_{B_{1}}...T_{B_{n}}\label{m2}\\
&  =%
%TCIMACRO{\dsum \limits_{\sigma\in S_{n}}}%
%BeginExpansion
{\displaystyle\sum\limits_{\sigma\in S_{n}}}
%EndExpansion
\left(  -1\right)  ^{\pi\left(  \sigma\right)  }T_{A_{\sigma\left(  1\right)
}}...T_{A_{\sigma\left(  n\right)  }}\text{,}\nonumber
\end{align}
where $S_{n}$ is the permutation group of $n$ element and $\pi\left(
\sigma\right)  $ is the parity of the permutation $\sigma$. \ In the appendix
we will show that the realization (\ref{m2})\ of the multibracket satisfy the identity%

\begin{align}
&  \frac{1}{\left(  n-1\right)  !}\frac{1}{n!}%
%TCIMACRO{\dsum \limits_{\sigma\in S_{2n-1}}}%
%BeginExpansion
{\displaystyle\sum\limits_{\sigma\in S_{2n-1}}}
%EndExpansion
\left(  -1\right)  ^{\pi\left(  \sigma\right)  }\left[  \left[  T_{A_{\sigma
\left(  1\right)  }},...,T_{A_{\sigma\left(  n\right)  }}\right]
,T_{A_{\sigma\left(  n+1\right)  }},...,T_{A_{\sigma\left(  2n-1\right)  }%
}\right] \label{m3}\\
&  =\left\{
\begin{array}
[c]{c}%
0\text{ \ \ \ \ \ \ \ \ \ \ \ \ \ \ \ \ \ \ \ \ \ \ \ \ \ \ \ ,\ }n\text{
even}\\
n\left[  T_{A_{1}},...,T_{A_{2n-1}}\right]  \text{, \ \ \ }n\text{ odd.}%
\end{array}
\right\}  .\nonumber
\end{align}
This means that is possible to obtain a generalization of the Jacobi identity
for $n$ even. For $n$ odd we obtain an identity which contains a combination
of multibrackets of different orders. Thus we can postulate that
\cite{deazcarraga}
\begin{equation}
\frac{1}{\left(  n-1\right)  !}\frac{1}{n!}%
%TCIMACRO{\dsum \limits_{\sigma\in S_{2n-1}}}%
%BeginExpansion
{\displaystyle\sum\limits_{\sigma\in S_{2n-1}}}
%EndExpansion
\left(  -1\right)  ^{\pi\left(  \sigma\right)  }\left[  \left[  T_{A_{\sigma
\left(  1\right)  }},...,T_{A_{\sigma\left(  n\right)  }}\right]
,T_{A_{\sigma\left(  n+1\right)  }},...,T_{A_{\sigma\left(  2n-1\right)  }%
}\right]  =0\text{,} \label{m1_2}%
\end{equation}
is the appropriate generalization of the Jacobi Identity for $n$ even. This
identity implies the following condition on the structure constants
$C_{A_{1}...A_{n}}^{B}$:
\begin{equation}
\varepsilon_{A_{1}...A_{2n-1}}^{B_{1}...B_{2n-1}}C_{B_{1}...B_{n}}%
^{C}C_{CB_{n+1}...B_{2n-1}}^{D}=0 \label{m4}%
\end{equation}
which is the generalization of the Jacobi condition \cite{deazcarraga}.

By analogy with the standard Lie algebra, we may now give the following
definition \cite{deazcarraga}:

\begin{definition}
Let $\mathcal{G}$ be a Lie algebra and let $n$ be even. A higher-order Lie
algebra or multialgebra on $\mathcal{G}$ is the algebra defined by the pair
$\left(  \mathcal{G},\left[  ,...,\right]  \right)  $ where the multibracket
$\left[  ,...,\right]  $ (\ref{m1}) is multilinear, antisymmetric and
satisfies the generalized Jacobi identity (\ref{m1_2}); and where the
higher-order structure constants satisfy the generalized Jacobi condition
(\ref{m4}).
\end{definition}

The following definition generalizes the concept of Subalgebra:

\begin{definition}
(\textbf{Submultialgebra):} Let $\left(  \mathcal{G},\left[  ,...,\right]
\right)  $ be a multialgebra, and consider the Lie algebra $\mathcal{G}$ of
the form $\mathcal{G}=V_{0}\oplus V_{1}$. The subspace $\left(  V_{0},\left[
,...,\right]  \right)  $ will be called a submultialgebra of $\left(
\mathcal{G},\left[  ,...,\right]  \right)  $ if it satisfies
\begin{equation}
\left[  V_{0},V_{0},...,V_{0}\right]  \subset V_{0}. \label{sm1}%
\end{equation}

\end{definition}

The existence of submultialgebras is reflected in certain definite
restrictions on the structure constants. Let $C_{A_{1}...A_{n}}^{B}$ be the
generalized structure constants of the multialgebra $\left(  \mathcal{G}%
,\left[  ,...,\right]  \right)  .$ If $\left\{  T_{A_{i}}\right\}  $,
$\left\{  T_{a_{i}^{0}}\right\}  $ and $\left\{  T_{a_{i}^{1}}\right\}  $
denote the bases of $\mathcal{G}$, $V_{0}$ and $V_{1}$ respectively, where
$A_{i}=1,...,\dim\mathcal{G}$, $a_{i}^{0}=1,...,\dim V_{0}$ and $a_{i}%
^{1}=\dim V_{0}+1,...,\dim\mathcal{G}$, then the condition (\ref{sm1}) can be
expressed as%
\begin{equation}
C_{a_{1}^{0}...a_{n}^{0}}^{b^{1}}=0 \label{sm2}%
\end{equation}
for $a_{1}^{0}...a_{n}^{0}\leq\dim V_{0}$ and $b^{1}\geq\dim V_{0}+1.$ In
fact, If $V_{0}$ is a submultialgebra then $\left[  V_{0},V_{0},...,V_{0}%
\right]  \subset V_{0}.$ This mean that%
\begin{equation}
\left[  T_{a_{1}^{0}},...,T_{a_{n}^{0}}\right]  =C_{a_{1}^{0}...a_{n}^{0}%
}^{b^{0}}T_{b^{0}}. \label{sm3}%
\end{equation}
i.e. for $\dim V_{0}$ $<$ $b^{1}<\dim G$ we have $C_{a_{1}^{0}...a_{n}^{0}%
}^{b^{1}}=0.$

The following theorem generalizes the concept of reduction of Lie algebras of
ref. \cite{sexpansion} to higher-order Lie algebras.

\begin{theorem}
(Reduced Multialgebra): Let $\left(  \mathcal{G},\left[  ,...,\right]
\right)  $ be a multialgebra, and consider the Lie algebra $G$ of the form
$G=V_{0}\oplus V_{1},$ with $\left\{  T_{A_{i}}\right\}  $ being a basis for
$G$, $\left\{  T_{a_{i}^{0}}\right\}  $ a basis for $V_{0}$ and $\left\{
T_{a_{i}^{1}}\right\}  $ a basis for $V_{1}$. If the condition
\begin{equation}
\left[  V_{1},V_{0},...,V_{0}\right]  \subset V_{1}, \label{mr1}%
\end{equation}
is satisfied, then the structure constants $C_{e^{1}B_{n+1}...B_{2n-1}}%
^{d^{0}}$ are cero, which lead to that the structure constants $C_{a_{1}%
^{0}...a_{n}^{0}}^{b^{0}}$ satisfy the generalized Jacobi condition by
themselves, and therefore
\begin{equation}
\left[  T_{a_{1}^{0}},...,T_{a_{n}^{0}}\right]  =C_{a_{1}^{0}...a_{n}^{0}%
}^{b^{0}}T_{b^{0}} \label{mr2}%
\end{equation}
corresponds by itself to a high-order Lie algebra. This algebra, with
structure constants $C_{a_{1}^{0}...a_{n}^{0}}^{\text{ \ \ \ \ \ \ \ \ \ \ }%
b^{0}}$, is called a reduced multialgebra of $\left(  \mathcal{G},\left[
,...,\right]  \right)  $ and is symbolized as $\left\vert V_{0},\left[
,...,\right]  \right\vert $.
\end{theorem}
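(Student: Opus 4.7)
The plan is to translate the abstract subspace condition $[V_{1},V_{0},\ldots,V_{0}]\subset V_{1}$ into a vanishing statement on structure constants, feed it into the generalized Jacobi condition (\ref{m4}) with all free indices restricted to $V_{0}$, and show that the potentially dangerous ``mixed'' contribution drops out. Concretely, I would first note that in the split basis the hypothesis (\ref{mr1}) says nothing other than
\begin{equation*}
C_{e^{1}b_{1}^{0}\ldots b_{n-1}^{0}}^{d^{0}}=0,
\end{equation*}
since a bracket whose inputs are one element of $V_{1}$ and $n-1$ elements of $V_{0}$ must, by assumption, have no component along the $V_{0}$-basis $\{T_{d^{0}}\}$. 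Up to a relabeling of the indices this is exactly the statement $C_{e^{1}B_{n+1}\ldots B_{2n-1}}^{d^{0}}=0$ mentioned in the theorem.

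Next I would write the generalized Jacobi condition (\ref{m4}) with all $2n-1$ lower free indices chosen to be $V_{0}$-indices $a_{j}^{0}$ and the upper free index chosen to be a $V_{0}$-index $d^{0}$:
\begin{equation*}
\varepsilon_{a_{1}^{0}\ldots a_{2n-1}^{0}}^{B_{1}\ldots B_{2n-1}}\,C_{B_{1}\ldots B_{n}}^{C}\,C_{CB_{n+1}\ldots B_{2n-1}}^{d^{0}}=0.
\end{equation*}
Because $\varepsilon$ is a generalized antisymmetric Kronecker delta, it forces $(B_{1},\ldots,B_{2n-1})$ to be a permutation of $(a_{1}^{0},\ldots,a_{2n-1}^{0})$, so every $B_{k}$ automatically lies in $V_{0}$. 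The only summed index still ranging over all of $\mathcal{G}=V_{0}\oplus V_{1}$ is the contracted internal index $C$. Splitting $C=c^{0}\in V_{0}$ versus $C=e^{1}\in V_{1}$, the $V_{1}$-piece carries the factor $C_{e^{1}b_{n+1}^{0}\ldots b_{2n-1}^{0}}^{d^{0}}$, which vanishes by the previous step, so only the $V_{0}$-piece survives. What remains is exactly the generalized Jacobi condition (\ref{m4}) written purely in terms of the restricted constants $C_{a_{1}^{0}\ldots a_{n}^{0}}^{b^{0}}$. Combined with the antisymmetry and multilinearity inherited from the parent bracket, this is the full content of the definition of a higher-order Lie algebra applied to $V_{0}$, which is the claim (\ref{mr2}).

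The only obstacle I anticipate is notational bookkeeping: the generalized Kronecker symbol $\varepsilon_{A_{1}\ldots A_{2n-1}}^{B_{1}\ldots B_{2n-1}}$ is cumbersome, and one must argue cleanly that restricting the lower free indices to $V_{0}$ automatically restricts every upper $B_{k}$ to $V_{0}$ as well, so that the hypothesis applies at the right slot of the second factor. Once this is granted, the proof reduces to the single clean observation that the reduction hypothesis kills the $V_{1}$-mediated term inside the internal sum over $C$, leaving a self-contained Jacobi identity on $V_{0}$.
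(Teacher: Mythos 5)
Your proposal is correct and follows essentially the same route as the paper: restrict the lower free indices of the generalized Jacobi condition (\ref{m4}) to $V_{0}$, observe that the generalized Kronecker symbol forces all summed $B_{k}$ into $V_{0}$, split the internal contracted index $C$ into its $V_{0}$ and $V_{1}$ parts, and use the hypothesis $C_{e^{1}b^{0}\ldots b^{0}}^{d^{0}}=0$ to kill the $V_{1}$-mediated term. The only cosmetic difference is that you justify the restriction of the $B_{k}$ by the permutation property of $\varepsilon$, where the paper spells this out via vanishing columns of the determinant representation.
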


\begin{proof}
If the condition
\[
\left[  V_{1},V_{0},...,V_{0}\right]  \subset V_{1}%
\]
is satisfied, we have
\[
\left[  T_{a_{1}^{0}},...,T_{a_{n}^{0}}\right]  =C_{a_{1}^{0}...a_{n}^{0}%
}^{b^{0}}T_{b^{0}}+C_{a_{1}^{0}...a_{n}^{0}}^{b^{1}}T_{b^{1}}%
\]%
\[
\left[  T_{b^{1}a_{1}^{0}},...,T_{a_{n-1}^{0}}\right]  =C_{b^{1}a_{1}%
^{0}...a_{n-1}^{0}}^{c^{1}}T_{c^{1}}%
\]%
\begin{equation}
\left[  T_{b_{1}^{1}},...,T_{b_{n}^{1}}\right]  =C_{b_{1}^{1}...b_{n}^{1}%
}^{c^{0}}T_{c^{0}}+C_{b_{1}^{1}...b_{n}^{1}}^{c^{1}}T_{c^{1}} \label{mr2'}%
\end{equation}
The structure constant of $G$ satisfy the Jacobi identity\textbf{ }%
\begin{equation}
\varepsilon_{A_{1}...A_{2n-1}}^{B_{1}...B_{2n-1}}C_{B_{1}...B_{n}}%
^{C}C_{CB_{n+1}...B_{2n-1}}^{D}=0\text{.} \label{mr3}%
\end{equation}
If $\mathcal{G}=V_{0}\oplus V_{1}$ y $\left\{  T_{A_{i}}\right\}  $, $\left\{
T_{a_{i}^{0}}\right\}  $, y $\left\{  T_{a_{i}^{1}}\right\}  $ are the
corresponding bases of $\mathcal{G}$, $V_{0}$, y $V_{1}$ (where $A_{i}%
=1,...,\dim\mathcal{G}$, $\ a_{i}^{0}=1,...,\dim V_{0}$ \ and $a_{i}^{1}=\dim
V_{0}+1,...,\dim\mathcal{G}$), then the generalized Jacobi condition on
$V_{0}$ is given by%
\begin{equation}
\varepsilon_{a_{1}^{0}...a_{2n-1}^{0}}^{B_{1}...B_{2n-1}}C_{B_{1}...B_{n}}%
^{E}C_{EB_{n+1}...B_{2n-1}}^{d^{0}}=0 \label{mr4}%
\end{equation}
which can be re-written as
\begin{equation}
\varepsilon_{a_{1}^{0}...a_{2n-1}^{0}}^{B_{1}...B_{2n-1}}C_{B_{1}...B_{n}%
}^{e^{0}}C_{e^{0}B_{n+1}...B_{2n-1}}^{d^{0}}+\varepsilon_{a_{1}^{0}%
...a_{2n-1}^{0}}^{B_{1}...B_{2n-1}}C_{B_{1}...B_{n}}^{e^{1}}C_{e^{1}%
B_{n+1}...B_{2n-1}}^{d^{0}}=0\text{.} \label{mr5}%
\end{equation}
We consider now the indices $B_{1}...B_{2n-1}$. \ If one of these indices
takes on a value in $V_{1},$ we have
\begin{equation}
\varepsilon_{a_{1}^{0}.......a_{2n-1}^{0}}^{b_{1}^{1}b_{2}^{0}...b_{2n-1}^{0}%
}=\left\vert
\begin{array}
[c]{cccc}%
\delta_{a_{1}^{0}}^{b_{1}^{1}} & \delta_{a_{1}^{0}}^{b_{2}^{0}} & \ldots &
\delta_{a_{1}^{0}}^{b_{2n-1}^{0}}\\
\delta_{a_{2}^{0}}^{b_{1}^{1}} & \delta_{a_{2}^{0}}^{b_{2}^{0}} & \ldots &
\delta_{a_{2}^{0}}^{b_{2n-1}^{0}}\\
\vdots & \vdots & \ddots & \vdots\\
\delta_{a_{2n-1}^{0}}^{b_{1}^{1}} & \delta_{a_{2n-1}^{0}}^{b_{2}^{0}} & \ldots
& \delta_{a_{2n-1}^{0}}^{b_{2n-1}^{0}}%
\end{array}
\right\vert =\left\vert
\begin{array}
[c]{cccc}%
0 & \delta_{a_{1}^{0}}^{b_{2}^{0}} & \ldots & \delta_{a_{1}^{0}}^{b_{2n-1}%
^{0}}\\
0 & \delta_{a_{2}^{0}}^{b_{2}^{0}} & \ldots & \delta_{a_{2}^{0}}^{b_{2n-1}%
^{0}}\\
\vdots & \vdots & \ddots & \vdots\\
0 & \delta_{a_{2n-1}^{0}}^{b_{2}^{0}} & \ldots & \delta_{a_{2n-1}^{0}%
}^{b_{2n-1}^{0}}%
\end{array}
\right\vert =0. \label{mr6}%
\end{equation}
From (\ref{mr6}) we can see that a column of the determinant is zero and
therefore $\varepsilon_{a_{1}^{0}.......a_{2n-1}^{0}}^{b_{1}^{1}b_{2}%
^{0}...b_{2n-1}^{0}}=0$. \ Similarly, any permutation on the set $\left(
b_{1}^{1}b_{2}^{0}...b_{2n-1}^{0}\right)  $ in $\varepsilon_{a_{1}%
^{0}.......a_{2n-1}^{0}}^{b_{1}^{1}b_{2}^{0}...b_{2n-1}^{0}}$ will be null. If
two indices of the set $\left(  B_{1}...B_{2n-1}\right)  $ take on values in
$V_{1},$ we have
\begin{equation}
\varepsilon_{a_{1}^{0}...........a_{2n-1}^{0}}^{b_{1}^{1}b_{2}^{1}b_{3}%
^{0}...b_{2n-1}^{0}}=\left\vert
\begin{array}
[c]{ccccc}%
\delta_{a_{1}^{0}}^{b_{1}^{1}} & \delta_{a_{1}^{0}}^{b_{2}^{1}} &
\delta_{a_{1}^{0}}^{b_{3}^{0}} & \ldots & \delta_{a_{1}^{0}}^{b_{2n-1}^{0}}\\
\delta_{a_{2}^{0}}^{b_{1}^{1}} & \delta_{a_{2}^{0}}^{b_{2}^{1}} &
\delta_{a_{2}^{0}}^{b_{3}^{0}} & \ldots & \delta_{a_{2}^{0}}^{b_{2n-1}^{0}}\\
\vdots & \vdots & \vdots & \ddots & \vdots\\
\delta_{a_{2n-1}^{0}}^{b_{1}^{1}} & \delta_{a_{2n-1}^{0}}^{b_{2}^{1}} &
\delta_{a_{2n-1}^{0}}^{b_{3}^{0}} & \ldots & \delta_{a_{2n-1}^{0}}%
^{b_{2n-1}^{0}}%
\end{array}
\right\vert =\left\vert
\begin{array}
[c]{ccccc}%
0 & 0 & \delta_{a_{1}^{0}}^{b_{3}^{0}} & \ldots & \delta_{a_{1}^{0}}%
^{b_{2n-1}^{0}}\\
0 & 0 & \delta_{a_{2}^{0}}^{b_{3}^{0}} & \ldots & \delta_{a_{2}^{0}}%
^{b_{2n-1}^{0}}\\
\vdots & \vdots & \vdots & \ddots & \vdots\\
0 & 0 & \delta_{a_{2n-1}^{0}}^{b_{3}^{0}} & \ldots & \delta_{a_{2n-1}^{0}%
}^{b_{2n-1}^{0}}%
\end{array}
\right\vert =0. \label{mr7}%
\end{equation}
From (\ref{mr7}) we can see that a column of the determinant is zero and
therefore $\varepsilon_{a_{1}^{0}...........a_{2n-1}^{0}}^{b_{1}^{1}b_{2}%
^{1}b_{3}^{0}...b_{2n-1}^{0}}=0$. In general the number of null columns
increase with the number of indices of \ the set $\left(  B_{1}...B_{2n-1}%
\right)  $, which take on values in $V_{1}$. Thus, the equation (\ref{mr5}) is
then given by%
\begin{equation}
\varepsilon_{a_{1}^{0}...a_{2n-1}^{0}}^{b_{1}^{0}...b_{2n-1}^{0}}C_{b_{1}%
^{0}...b_{n}^{0}}^{e^{0}}C_{e^{0}B_{n+1}...B_{2n-1}}^{d^{0}}+\varepsilon
_{a_{1}^{0}...a_{2n-1}^{0}}^{b_{1}^{0}...b_{2n-1}^{0}}C_{b_{1}^{0}...b_{n}%
^{0}}^{e^{1}}C_{e^{1}B_{n+1}...B_{2n-1}}^{d^{0}}=0\text{.} \label{mr8}%
\end{equation}
From (\ref{mr8}) we can see that the structure constant $C_{a_{1}^{0}%
...a_{n}^{0}}^{b^{0}}$ satisfy the generalized Jacobi identity by themselves
in two cases:
\end{proof}

\begin{itemize}
\item When $C_{b_{1}^{0}...b_{n}^{0}}^{e^{1}}=0,$ i.e., when $V_{0}$ is a submultialgebra

\item When $C_{e^{1}B_{n+1}...B_{2n-1}}^{d^{0}}=0,$ i.e., when $\left[
V_{1},V_{0},...,V_{0}\right]  \subset V_{1}.$ This means that in this case the
structure constant $C_{a_{1}^{0}...a_{n}^{0}}^{b^{0}}$ satisfy the generalized
Jacobi identity and
\begin{equation}
\left[  T_{a_{1}^{0}},...,T_{a_{n}^{0}}\right]  =C_{a_{1}^{0}...a_{n}^{0}%
}^{b^{0}}T_{b^{0}} \label{mr9}%
\end{equation}
correspond by itself to a higher order Lie algebra. It is interesting to note
that a reduced multialgebra $\left\vert V_{0},\left[  ,...,\right]
\right\vert $ does not correspond to a submultialgebra of $\left(
\mathcal{G},\left[  ,...,\right]  \right)  $.
\end{itemize}

\begin{definition}
The Lie multialgebra obtained from the condition $\left[  V_{1},V_{0}%
,...,V_{0}\right]  \subset V_{1}$ i.e., with $C_{e^{1}B_{n+1}...B_{2n-1}%
}^{d^{0}}=0$ is called a reduced multialgebra of \ $G$ and will be symbolized
as $\mid V_{0}\mid.$
\end{definition}

\section{$S$-expansion of Higher-Order Lie Algebras}

In this section we shall review some aspects of the S-expansion procedure
introduced in ref. \cite{sexpansion}. The main point of this section and of
this paper is to show that the generalization of the S-expansion method
permits obtaining S-expanded higher-order Lie algebras. \ 

\subsection{$S$-Expansion of Lie Algebras}

The $S$-expansion method is based on combining the structure constants of
\ the Lie algebra\ $\left(  \mathcal{G},\left[  ,\right]  \right)  $ with the
inner law of a semigroup $S$ to define the Lie bracket of a new, $S$-expanded
algebra. Let $S=\left\{  \lambda_{\alpha}\right\}  $ be a finite Abelian
semigroup endowed with a commutative and associative composition law $S\times
S\rightarrow S,$ $\left(  \lambda_{\alpha},\lambda_{\beta}\right)
\mapsto\lambda_{\alpha}\lambda_{\beta}=K_{\alpha\beta}^{\text{ \ \ \ \ }%
\gamma}\lambda_{\gamma}.$ \textbf{Let }the pair $\left(  \mathcal{G},\left[
,\right]  \right)  $ a Lie algebra where $G$ is a finite dimensional vector
space, with basis $\left\{  T_{A}\right\}  _{A=1}^{\dim\mathcal{G}}$, over the
field $K$; and $\left[  ,\right]  $ is a ruler of compostion $G\times
G\longrightarrow G,$ $\left(  T_{A_{i}},T_{A_{j}}\right)  \longrightarrow
\left[  T_{A_{i}},T_{A_{j}}\right]  =C_{A_{i}A_{j}}^{A_{k}}T_{A_{k}}.$ The
direct product $G=S\otimes G$ is defined as the Cartesian product set
\begin{equation}
\mathfrak{G}=S\times\mathcal{G}=\left\{  T_{\left(  A,\alpha\right)  }%
=\lambda_{\alpha}T_{A}\text{ : }\lambda_{\alpha}\in S\text{ , }T_{A}%
\in\mathcal{G}\right\}  \label{s1}%
\end{equation}
endowed with a composition law $\left[  ,\right]  _{S}$ $:G\times G\rightarrow
G$ defined by%
\begin{equation}
\left[  T_{\left(  A,\alpha\right)  },T_{\left(  B,\beta\right)  }\right]
_{S}=:\lambda_{\alpha}\lambda_{\beta}\left[  T_{A},T_{B}\right]
=K_{\alpha\beta}^{\gamma}C_{AB}^{C}\lambda_{\gamma}T_{C}=C_{\left(
A,\alpha\right)  \left(  B,\beta\right)  }^{\left(  C,\gamma\right)
}T_{\left(  C,\gamma\right)  }.\label{s2'}%
\end{equation}
where $T_{\left(  A,\gamma\right)  }=\lambda_{\gamma}T_{A}$ is a basis of $G.$
The set (\ref{s1}) with the composition law (\ref{s2'}) is called a S-expanded
Lie algebra. \ This algebra is a Lie algebra structure defined over the vector
space obtained by taking ord $S$ copies of $G$
\[
\mathfrak{G:}\oplus_{\alpha\in S}W_{\alpha}\text{ }\left(  \mathbf{W}_{\alpha
}\approx\mathcal{G}\text{, }\mathbf{\forall\alpha}\right)
\]
$\dim G=ordS\times\dim G$ by means of the structure constants
\begin{equation}
C_{\left(  A,\alpha\right)  \left(  B,\beta\right)  }^{\left(  C,\gamma
\right)  }=C_{AB}^{C}\delta_{\alpha\beta}^{\gamma}\label{s2''}%
\end{equation}
where $\delta$ is the Kronecker symbol and the subindex $\alpha,\beta\in S$
denotes the inner compostion in $S$ so that $\delta_{\alpha\beta}^{\gamma}=1$
when $\alpha\beta=\gamma$ in $S$ and zero otherwise. The constants $C_{\left(
A,\alpha\right)  \left(  B,\beta\right)  }^{\left(  C,\gamma\right)  }$
defined by (\ref{s2''}) inherit the symmetry properties of $C_{AB}^{C}$ of $G$
\ by virtue of the abelian character of the $S$-product, and satisfy the
Jacobi identity.

In a nutshell, the S-expansion method can be seen as the natural
generalization of the In\"{o}n\"{u}-Wigner contraction, where instead of to
multiply the generators by a numerical parameter, we multiply the generator by
the elements of a Abelian semigroup.

\begin{theorem}
The product $\left[  ,\right]  _{S}$ defined in (\ref{s2'}) is also a Lie
product because it is linear, antisymmetric and satisfies the Jacobi identity.
This product defines a new Lie algebra characterized by the pair $\left(
\mathfrak{G,}\left[  ,\right]  _{S}\right)  $, and is called a $S$-expanded
Lie algebra.\textit{ }
\end{theorem}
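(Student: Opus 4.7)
The plan is to verify $K$-bilinearity, antisymmetry, and the Jacobi identity for $[,]_S$ on the basis $\{T_{(A,\alpha)}\}$ of $\mathfrak{G}$; bilinearity will then extend each property to all of $\mathfrak{G}$. Each axiom for the new bracket will follow from the corresponding axiom for $[,]$ combined with one structural property of the Abelian semigroup $S$: linearity needs nothing beyond the bilinearity of $[,]$, antisymmetry requires commutativity of $S$, and Jacobi requires associativity of $S$.

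Linearity is immediate from the defining formula $[T_{(A,\alpha)}, T_{(B,\beta)}]_S = K_{\alpha\beta}^{\gamma} C_{AB}^{C}\lambda_{\gamma}T_{C}$, since the coefficients $K_{\alpha\beta}^{\gamma}C_{AB}^{C}$ lie in the field $K$ and one simply extends the bracket by $K$-linearity in each slot. For antisymmetry, I would compute directly
\[
[T_{(A,\alpha)}, T_{(B,\beta)}]_S = \lambda_{\alpha}\lambda_{\beta}\,[T_A,T_B] = -\lambda_{\beta}\lambda_{\alpha}\,[T_B,T_A] = -[T_{(B,\beta)}, T_{(A,\alpha)}]_S,
\]
where the first equality is the definition of $[,]_S$, the second uses simultaneously the antisymmetry of $[,]$ on $\mathcal{G}$ and the identity $\lambda_\alpha\lambda_\beta = \lambda_\beta\lambda_\alpha$ valid because $S$ is Abelian, and the third is the definition again in reverse.

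The main step is Jacobi. Unpacking the definition of $[,]_S$ twice yields
\[
[[T_{(A,\alpha)}, T_{(B,\beta)}]_S, T_{(C,\gamma)}]_S = (\lambda_{\alpha}\lambda_{\beta})\lambda_{\gamma}\,[[T_A,T_B],T_C].
\]
By associativity of $S$, $(\lambda_{\alpha}\lambda_{\beta})\lambda_{\gamma}$ is a well-defined element of $S$ independent of the bracketing, and by commutativity it is invariant under any permutation of $(\alpha,\beta,\gamma)$. Consequently the three cyclic terms in the Jacobi sum for $[,]_S$ all carry the same semigroup prefactor $\lambda_\alpha\lambda_\beta\lambda_\gamma$, and the sum reduces to
\[
\sum_{\mathrm{cyc}}[[T_{(A,\alpha)},T_{(B,\beta)}]_S,T_{(C,\gamma)}]_S = \lambda_{\alpha}\lambda_{\beta}\lambda_{\gamma}\,\sum_{\mathrm{cyc}}[[T_A,T_B],T_C] = 0,
\]
the last equality being the Jacobi identity for the original Lie algebra $\mathcal{G}$.

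No step is an obstacle in a deep sense; the only item requiring care is the bookkeeping of the semigroup product, and in particular the observation that it is precisely associativity of $S$ which allows the three cyclic terms of the Jacobi sum to share a common prefactor in $S$, so that the Jacobi identity for $[,]_S$ reduces to that of $[,]$. Equivalently, at the level of structure constants, associativity of $S$ implies that $K_{\alpha\beta}^{\sigma}K_{\sigma\gamma}^{\tau}$ is invariant under permutations of $(\alpha,\beta,\gamma)$, which is exactly what is needed so that $C_{(A,\alpha)(B,\beta)}^{(C,\gamma)} = K_{\alpha\beta}^{\gamma}C_{AB}^{C}$ inherits the Jacobi condition from $C_{AB}^{C}$.
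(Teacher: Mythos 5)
Your proof is correct and follows essentially the same route as the paper: both arguments reduce the Jacobi identity for $\left[\,,\right]_{S}$ to that of $\left[\,,\right]$ by using commutativity and associativity of the semigroup law to extract the common prefactor $\lambda_{\alpha}\lambda_{\beta}\lambda_{\gamma}$ (equivalently $K_{\alpha\beta\gamma}^{\ \ \ \ \delta}\lambda_{\delta}$) from every term of the Jacobi sum. The only cosmetic difference is that you write the Jacobi identity as a cyclic sum while the paper writes it as the signed sum over $S_{3}$, and you spell out linearity and antisymmetry a bit more explicitly than the paper, which merely asserts that these are inherited from the Abelian character of the $S$-product.
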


\begin{proof}
Since the $S$-product is abelian, the product $\left[  ,\right]  _{S}$ defined
by \ (\ref{s2'}) inherits the symmetry properties of the \ \ product $\left[
,\right]  $ of \ $\mathcal{G},$ and satisfies the Jacobi identity. In fact,%
\begin{align}
&  \left[  \left[  T_{\left(  A_{1},\alpha_{1}\right)  },T_{\left(
A_{2},\alpha_{2}\right)  }\right]  _{S},T_{\left(  A_{3},\alpha_{3}\right)
}\right]  _{S}+\left[  \left[  T_{\left(  A_{2},\alpha_{2}\right)
},T_{\left(  A_{3},\alpha_{3}\right)  }\right]  _{S},T_{\left(  A_{1}%
,\alpha_{1}\right)  }\right]  _{S}\nonumber\\
&  +\left[  \left[  T_{\left(  A_{3},\alpha_{3}\right)  },T_{\left(
A_{1},\alpha_{1}\right)  }\right]  _{S},T_{\left(  A_{2},\alpha_{2}\right)
}\right]  _{S}\nonumber\\
&  =\frac{1}{1!}\frac{1}{2!}%
%TCIMACRO{\dsum \limits_{\sigma\in S_{3}}}%
%BeginExpansion
{\displaystyle\sum\limits_{\sigma\in S_{3}}}
%EndExpansion
\left(  -1\right)  ^{\pi\left(  \sigma\right)  }\left[  \left[  T_{\left(
A_{\sigma\left(  1\right)  },\alpha_{\sigma\left(  1\right)  }\right)
},T_{\left(  A_{\sigma\left(  2\right)  },\alpha_{\sigma\left(  2\right)
}\right)  }\right]  _{S},T_{\left(  A_{\sigma\left(  3\right)  }%
,\alpha_{\sigma\left(  3\right)  }\right)  }\right]  _{S}\nonumber\\
&  =\frac{1}{1!}\frac{1}{2!}%
%TCIMACRO{\dsum \limits_{\sigma\in S_{3}}}%
%BeginExpansion
{\displaystyle\sum\limits_{\sigma\in S_{3}}}
%EndExpansion
\left(  -1\right)  ^{\pi\left(  \sigma\right)  }\lambda_{\alpha_{\sigma\left(
1\right)  }}\lambda_{\alpha_{\sigma\left(  2\right)  }}\lambda_{\alpha
_{\sigma\left(  3\right)  }}\left[  \left[  T_{A_{\sigma\left(  1\right)  }%
},T_{A_{\sigma\left(  2\right)  }}\right]  ,T_{A_{\sigma\left(  3\right)  }%
}\right] \nonumber\\
&  =\frac{1}{1!}\frac{1}{2!}%
%TCIMACRO{\dsum \limits_{\sigma\in S_{3}}}%
%BeginExpansion
{\displaystyle\sum\limits_{\sigma\in S_{3}}}
%EndExpansion
\left(  -1\right)  ^{\pi\left(  \sigma\right)  }K_{\alpha_{\sigma\left(
1\right)  }\alpha_{\sigma\left(  2\right)  }\alpha_{\sigma\left(  3\right)  }%
}^{\gamma}\lambda_{\gamma}\left[  \left[  T_{A_{\sigma\left(  1\right)  }%
},T_{A_{\sigma\left(  2\right)  }}\right]  ,T_{A_{\sigma\left(  3\right)  }%
}\right] \nonumber\\
&  =K_{\alpha_{1}\alpha_{2}\alpha_{3}}^{\gamma}\lambda_{\gamma}\left(
\frac{1}{1!}\frac{1}{2!}%
%TCIMACRO{\dsum \limits_{\sigma\in S_{3}}}%
%BeginExpansion
{\displaystyle\sum\limits_{\sigma\in S_{3}}}
%EndExpansion
\left(  -1\right)  ^{\pi\left(  \sigma\right)  }\left[  \left[  T_{A_{\sigma
\left(  1\right)  }},T_{A_{\sigma\left(  2\right)  }}\right]  ,T_{A_{\sigma
\left(  3\right)  }}\right]  \right)  =0 \label{ps1}%
\end{align}
where we have used the commutativity $\left(  K_{\alpha_{\sigma\left(
1\right)  }\alpha_{\sigma\left(  2\right)  }\alpha_{\sigma\left(  3\right)  }%
}^{\gamma}=K_{\alpha_{1}\alpha_{2}\alpha_{3}}^{\gamma}\right)  $ and
associativity of the semigroup inner law, and the fact that the product
$\left[  ,\right]  $ satisfies the Jacobi identity. \ 
\end{proof}

From (\ref{ps1}) we can see that the Jacobi identity of the $S$-expanded Lie
algebra $\left(  S\otimes\mathcal{G}\mathfrak{,}\left[  ,\right]  _{S}\right)
$
\begin{equation}
\left(
\begin{array}
[c]{c}%
\left[  \left[  T_{\left(  A_{1},\alpha_{1}\right)  },T_{\left(  A_{2}%
,\alpha_{2}\right)  }\right]  _{S},T_{\left(  A_{3},\alpha_{3}\right)
}\right]  _{S}+\left[  \left[  T_{\left(  A_{2},\alpha_{2}\right)
},T_{\left(  A_{3},\alpha_{3}\right)  }\right]  _{S},T_{\left(  A_{1}%
,\alpha_{1}\right)  }\right]  _{S}\\
+\left[  \left[  T_{\left(  A_{3},\alpha_{3}\right)  },T_{\left(  A_{1}%
,\alpha_{1}\right)  }\right]  _{S},T_{\left(  A_{2},\alpha_{2}\right)
}\right]  _{S}%
\end{array}
\right)  =0 \label{ps2}%
\end{equation}
can be obtained if we multiply the Jacobi identity of the Lie algebra $\left(
\mathcal{G}\mathfrak{,}\left[  ,\right]  \right)  $ by $\lambda_{\alpha_{1}%
}\lambda_{\alpha_{2}}\lambda_{\alpha_{3}}$ or by the 3-selector $K_{\alpha
_{1}\alpha_{2}\alpha_{3}}^{\gamma}$:
\begin{equation}
\text{JI}\left(  S\otimes\mathcal{G}\mathfrak{,}\left[  ,\right]  _{S}\right)
=K_{\alpha_{1}\alpha_{2}\alpha_{3}}^{\gamma}\left(  \text{JI}\left(
\mathcal{G}\mathfrak{,}\left[  ,\right]  \right)  \right)  \text{ .}
\label{ps3}%
\end{equation}
Similarly, if multiply the Jacobi condition of the Lie algebra $\left(
\mathcal{G}\mathfrak{,}\left[  ,\right]  \right)  $
\begin{equation}
\frac{1}{2}\varepsilon_{A_{1}A_{2}A_{3}}^{B_{1}B_{2}B_{3}}C_{B_{1}B_{2}}%
^{C}C_{CB_{3}}^{D}=0 \label{ps4}%
\end{equation}
by $K_{\alpha_{1}\alpha_{2}\alpha_{3}}^{\beta}=K_{\alpha_{1}\alpha_{2}%
}^{\gamma}K_{\gamma\alpha_{3}}^{\beta}$ , we obtain the Jacobi condition of
the $S$-expanded Lie algebra $\left(  S\otimes\mathcal{G}\mathfrak{,}\left[
,\right]  _{S}\right)  $. In fact,%
\begin{equation}
K_{\alpha_{1}\alpha_{2}\alpha_{3}}^{\beta}\left(  \frac{1}{2}\varepsilon
_{A_{1}A_{2}A_{3}}^{B_{1}B_{2}B_{3}}C_{B_{1}B_{2}}^{C}C_{CB_{3}}^{D}\right)
=\frac{1}{2}\varepsilon_{A_{1}A_{2}A_{3}}^{B_{1}B_{2}B_{3}}K_{\alpha_{1}%
\alpha_{2}}^{\gamma}C_{B_{1}B_{2}}^{C}K_{\gamma\alpha_{3}}^{\beta}C_{CB_{3}%
}^{D}=0 \label{ps5}%
\end{equation}%
\begin{equation}
\frac{1}{2}\varepsilon_{A_{1}A_{2}A_{3}}^{B_{1}B_{2}B_{3}}C_{\left(
B_{1},\alpha_{1}\right)  \left(  B_{2},\alpha_{2}\right)  }^{\left(
C,\gamma\right)  }C_{\left(  C,\gamma\right)  \left(  B_{3},\alpha_{3}\right)
}^{\left(  D,\beta\right)  }=0. \label{ps6}%
\end{equation}

\subsection{$S$-Expansion of Lie Multialgebras}

The $S$-expansion method is based on combining the structure constants of
$\ \left(  \mathcal{G},\left[  ,...,\right]  \right)  $ with the inner law of
a semigroup $S$ to define the Lie bracket of a new, $S$-expanded multialgebra.
Let $S=\left\{  \lambda_{\alpha}\right\}  $ be a finite Abelian semigroup
endowed with a commutative and associative composition law $S\times
S\rightarrow S,$ $\left(  \lambda_{\alpha},\lambda_{\beta}\right)
\mapsto\lambda_{\alpha}\lambda_{\beta}=K_{\alpha\beta}^{\text{ \ \ \ \ }%
\gamma}\lambda_{\gamma}.$ The direct product $G=S\otimes G$ is defined as the
cartesian product set
\begin{equation}
\mathfrak{G}=S\times\mathcal{G}=\left\{  T_{\left(  A,\alpha\right)  }%
=\lambda_{\alpha}T_{A}\text{ : }\lambda_{\alpha}\in S\text{ , }T_{A}%
\in\mathcal{G}\right\}  \label{f0}%
\end{equation}
with the composition law $\left[  ,...,\right]  _{S}:G\overset{n}%
{\times...\times}G\rightarrow G$, defined by%
\[
\left[  T_{\left(  A_{1},\alpha_{1}\right)  },...,T_{\left(  A_{n},\alpha
_{n}\right)  }\right]  _{S}=\lambda_{\alpha_{1}}...\lambda_{\alpha_{n}}\left[
T_{A_{1}},...,T_{A_{n}}\right]
\]%
\begin{equation}
\left[  T_{\left(  A_{1},\alpha_{1}\right)  },...,T_{\left(  A_{n},\alpha
_{n}\right)  }\right]  _{S}=K_{\alpha_{1}...\alpha_{n}}^{\gamma}%
C_{A_{1}...A_{n}}^{C}\lambda_{\gamma}T_{C}=C_{\left(  A_{1},\alpha_{1}\right)
...\left(  A_{n},\alpha_{n}\right)  }^{\left(  C,\gamma\right)  }T_{\left(
C,\gamma\right)  } \label{f1}%
\end{equation}
where $T_{\left(  A_{i},\alpha_{i}\right)  }\in G$, $\forall i=1,...,n,$ and
\ \ $C_{\left(  A_{1},\alpha_{1}\right)  ...\left(  A_{n},\alpha_{n}\right)
}^{\left(  C,\gamma\right)  }=K_{\alpha_{1}...\alpha_{n}}^{\gamma}%
C_{A_{1}...A_{n}}^{C}.$ \ \ \ \ 

The set $G=S\times G$ (\ref{f0}) with the composition law (\ref{f1}) define a
new Lie multialgebra which will be called S-expanded Lie multialgebra. This
algebra is a Lie algebra structure defined over the vector space obtained by
taking $S$ copies of $G$ by means of the structure constant $C_{\left(
A_{1},\alpha_{1}\right)  ...\left(  A_{n},\alpha_{n}\right)  }^{\left(
C,\gamma\right)  }=K_{\alpha_{1}...\alpha_{n}}^{\gamma}C_{A_{1}...A_{n}}^{C}$
where $K_{\alpha_{1}...\alpha_{n}}^{\gamma}=K_{\alpha_{1}...\alpha_{n-1}%
}^{\sigma}K_{\sigma\alpha_{n}}^{\gamma}$. \ The structure constants
$C_{\left(  A_{1},\alpha_{1}\right)  ...\left(  A_{n},\alpha_{n}\right)
}^{\left(  C,\gamma\right)  }$ defined in (\ref{f1}) inherit the symmetry
properties of $C_{A_{1}...A_{n}}^{C}$ of $G$ by virtue of the abelian
character of the $S$-product.

\begin{theorem}
The product $\left[  ,...,\right]  _{S}$\ defined in (\ref{f1}))\ is
multilinear, antisymmetric and satisfies the generalized Jacobi identity
(GJI).%
\begin{equation}
a%
%TCIMACRO{\dsum \limits_{\sigma\in S_{2n-1}}}%
%BeginExpansion
{\displaystyle\sum\limits_{\sigma\in S_{2n-1}}}
%EndExpansion
\left(  -1\right)  ^{\pi\left(  \sigma\right)  }\left[  \left[  T_{\left(
A_{\sigma\left(  1\right)  },\alpha_{\sigma\left(  1\right)  }\right)
},..,T_{\left(  A_{\sigma\left(  n\right)  },\alpha_{\sigma\left(  n\right)
}\right)  }\right]  _{S},T_{\left(  A_{\sigma\left(  n+1\right)  }%
,\alpha_{\sigma\left(  n+1\right)  }\right)  },..,T_{\left(  A_{\sigma\left(
2n-1\right)  },\alpha_{\sigma\left(  2n-1\right)  }\right)  }\right]  _{S}=0
\end{equation}
where
\[
a=\frac{1}{\left(  n-1\right)  !}\frac{1}{n!}%
\]

\end{theorem}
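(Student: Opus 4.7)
The plan is to mirror exactly the structure of the proof given earlier for the ordinary $S$-expanded Lie algebra (eq.~(\ref{ps1})), generalized to the $n$-bracket setting. First I would dispose of multilinearity and antisymmetry by inspection of the definition (\ref{f1}): multilinearity in each slot is immediate because $\lambda_{\alpha_1}\cdots\lambda_{\alpha_n}[T_{A_1},\ldots,T_{A_n}]$ is $K$-linear in each pair $(A_i,\alpha_i)$ by the multilinearity of $[,\ldots,]$ on $\mathcal{G}$; antisymmetry under any transposition of two entries follows because the semigroup factor $K_{\alpha_1\ldots\alpha_n}^{\gamma}$ is totally symmetric in its lower indices (Abelian and associative $S$-product) while the structure constants $C_{A_1\ldots A_n}^{C}$ of the underlying multialgebra are totally antisymmetric, so the product inherits the antisymmetry on the pair labels $(A_i,\alpha_i)$.

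For the generalized Jacobi identity, the key observation is that each nested bracket in the GJI sum carries exactly $2n-1$ semigroup factors, and these factors can be pulled outside the bracket by the (multi)linearity just established. Concretely, applying (\ref{f1}) twice yields
\begin{equation*}
\left[\left[T_{(A_{\sigma(1)},\alpha_{\sigma(1)})},\ldots,T_{(A_{\sigma(n)},\alpha_{\sigma(n)})}\right]_{S},T_{(A_{\sigma(n+1)},\alpha_{\sigma(n+1)})},\ldots,T_{(A_{\sigma(2n-1)},\alpha_{\sigma(2n-1)})}\right]_{S}
\end{equation*}
\begin{equation*}
=\lambda_{\alpha_{\sigma(1)}}\cdots\lambda_{\alpha_{\sigma(2n-1)}}\left[\left[T_{A_{\sigma(1)}},\ldots,T_{A_{\sigma(n)}}\right],T_{A_{\sigma(n+1)}},\ldots,T_{A_{\sigma(2n-1)}}\right].
\end{equation*}

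Next I would use the commutativity and associativity of $S$ to rewrite the prefactor as $\lambda_{\alpha_{\sigma(1)}}\cdots\lambda_{\alpha_{\sigma(2n-1)}}=\lambda_{\alpha_{1}}\cdots\lambda_{\alpha_{2n-1}}=K_{\alpha_{1}\ldots\alpha_{2n-1}}^{\gamma}\lambda_{\gamma}$, which is crucially \emph{independent of} $\sigma$. Factoring this common $K\lambda$ out of the GJI sum then leaves precisely
\begin{equation*}
K_{\alpha_{1}\ldots\alpha_{2n-1}}^{\gamma}\lambda_{\gamma}\left(\tfrac{1}{(n-1)!\,n!}\sum_{\sigma\in S_{2n-1}}(-1)^{\pi(\sigma)}\left[\left[T_{A_{\sigma(1)}},\ldots,T_{A_{\sigma(n)}}\right],T_{A_{\sigma(n+1)}},\ldots,T_{A_{\sigma(2n-1)}}\right]\right),
\end{equation*}
and the parenthesized factor vanishes by (\ref{m1_2}) since $n$ is even. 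This is exactly the higher-order analogue of the schematic identity $\mathrm{JI}(S\otimes\mathcal{G},[,]_S)=K_{\alpha_{1}\alpha_{2}\alpha_{3}}^{\gamma}\,\mathrm{JI}(\mathcal{G},[,])$ displayed in (\ref{ps3}).

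I do not anticipate genuine conceptual obstacles: the argument is essentially bookkeeping, and the nontrivial input (vanishing of the inner sum for even $n$) is delegated to (\ref{m1_2}). The only subtlety worth stating carefully is the $\sigma$-independence of $\lambda_{\alpha_{\sigma(1)}}\cdots\lambda_{\alpha_{\sigma(2n-1)}}$, which is what allows the common semigroup factor to be pulled out of the antisymmetrized sum; this step is the place where commutativity \emph{and} associativity of $S$ are both used, and it is the one line of the proof where care is needed. If desired, one may also translate the result into a statement on structure constants by contracting with $\varepsilon_{(A_1,\alpha_1)\ldots(A_{2n-1},\alpha_{2n-1})}^{(B_1,\beta_1)\ldots(B_{2n-1},\beta_{2n-1})}$, obtaining the generalized Jacobi condition for $C_{(A_{1},\alpha_{1})\ldots(A_{n},\alpha_{n})}^{(C,\gamma)}$ in direct analogy with (\ref{ps5})--(\ref{ps6}).
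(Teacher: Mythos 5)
Your proposal is correct and follows essentially the same route as the paper's own proof: pull the semigroup factors out of the nested bracket, use commutativity and associativity of $S$ to make the prefactor $K_{\alpha_{1}\ldots\alpha_{2n-1}}^{\gamma}\lambda_{\gamma}$ independent of $\sigma$, and reduce the expression to the generalized Jacobi identity (\ref{m1_2}) of the original multialgebra. Your explicit remarks on antisymmetry (symmetric $K$ times antisymmetric $C$) and on where associativity enters are slightly more careful than the paper's one-line treatment, but the argument is the same.
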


\begin{proof}
Since the $S$-product is abelian, the product $\left[  ,...,\right]  _{S}%
$\ defined by (\ref{f1}) inherits the symmetry properties of the \ \ product
$\left[  ,...,\right]  $ of $\ \left(  \mathcal{G},\left[  ,...,\right]
\right)  ,$ and satisfies the generalized Jacobi identity. In fact,
\begin{align}
&
%TCIMACRO{\dsum \limits_{\sigma\in S_{2n-1}}}%
%BeginExpansion
{\displaystyle\sum\limits_{\sigma\in S_{2n-1}}}
%EndExpansion
\left(  -1\right)  ^{\pi\left(  \sigma\right)  }\left[  \left[  T_{\left(
A_{\sigma\left(  1\right)  },\alpha_{\sigma\left(  1\right)  }\right)
},...,T_{\left(  A_{\sigma\left(  n\right)  },\alpha_{\sigma\left(  n\right)
}\right)  }\right]  _{S},T_{\left(  A_{\sigma\left(  n+1\right)  }%
,\alpha_{\sigma\left(  n+1\right)  }\right)  },...,T_{\left(  A_{\sigma\left(
2n-1\right)  },\alpha_{\sigma\left(  2n-1\right)  }\right)  }\right]
_{S}\nonumber\\
&  =%
%TCIMACRO{\dsum \limits_{\sigma\in S_{2n-1}}}%
%BeginExpansion
{\displaystyle\sum\limits_{\sigma\in S_{2n-1}}}
%EndExpansion
\left(  -1\right)  ^{\pi\left(  \sigma\right)  }\lambda_{\alpha_{\sigma\left(
1\right)  }}\ldots\lambda_{\alpha_{\sigma\left(  2n-1\right)  }}\left[
\left[  T_{A_{\sigma\left(  1\right)  }},...,T_{A_{\sigma\left(  n\right)  }%
}\right]  ,T_{A_{\sigma\left(  n+1\right)  }},...,T_{A_{\sigma\left(
2n-1\right)  }}\right] \nonumber\\
&  =%
%TCIMACRO{\dsum \limits_{\sigma\in S_{2n-1}}}%
%BeginExpansion
{\displaystyle\sum\limits_{\sigma\in S_{2n-1}}}
%EndExpansion
\left(  -1\right)  ^{\pi\left(  \sigma\right)  }K_{\alpha_{\sigma\left(
1\right)  }\ldots\alpha_{\sigma\left(  2n-1\right)  }}^{\gamma}\lambda
_{\gamma}\left[  \left[  T_{A_{\sigma\left(  1\right)  }},...,T_{A_{\sigma
\left(  n\right)  }}\right]  ,T_{A_{\sigma\left(  n+1\right)  }}%
,...,T_{A_{\sigma\left(  2n-1\right)  }}\right] \nonumber\\
&  =K_{\alpha_{1}\ldots\alpha_{2n-1}}^{\gamma}\lambda_{\gamma}\left(
%TCIMACRO{\dsum \limits_{\sigma\in S_{2n-1}}}%
%BeginExpansion
{\displaystyle\sum\limits_{\sigma\in S_{2n-1}}}
%EndExpansion
\left(  -1\right)  ^{\pi\left(  \sigma\right)  }\left[  \left[  T_{A_{\sigma
\left(  1\right)  }},...,T_{A_{\sigma\left(  n\right)  }}\right]
,T_{A_{\sigma\left(  n+1\right)  }},...,T_{A_{\sigma\left(  2n-1\right)  }%
}\right]  \right)  =0, \label{f3'}%
\end{align}
where we have used the commutativity $K_{\alpha_{\sigma\left(  1\right)
}\ldots\alpha_{\sigma\left(  2n-1\right)  }}^{\gamma}=K_{\alpha_{1}%
\ldots\alpha_{2n-1}}^{\gamma}$ and associativity of the semigroup inner law,
and the fact that the product $\left[  ,...,\right]  $ satisfies the
generalized Jacobi identity.
\end{proof}

From (\ref{f3'}) we can see that the Jacobi identity of the $S$-expanded Lie
multialgebra $\left(  S\otimes\mathcal{G}\mathfrak{,}\left[  ,...,\right]
_{S}\right)  $ can be obtained if we multiply the generalized Jacobi identity
of the Lie multialgebra $\left(  \mathcal{G}\mathfrak{,}\left[  ,...,\right]
\right)  $ by $K_{\alpha_{1}\ldots\alpha_{2n-1}}^{\gamma}.$

Similarly, if we multiply the generalized Jacobi condition of the Lie algebra
$\left(  \mathcal{G},\left[  ,...,\right]  \right)  $%
\begin{equation}
\varepsilon_{A_{1}...A_{2n-1}}^{B_{1}...B_{2n-1}}C_{B_{1}...B_{n}}%
^{C}C_{CB_{n+1}...B_{2n-1}}^{D}=0 \label{f4}%
\end{equation}
by $K_{\alpha_{1}\ldots\alpha_{2n-1}}^{\beta}=K_{\alpha_{1}\ldots\alpha_{n}%
}^{\gamma}K_{\gamma\alpha_{n+1}\ldots\alpha_{2n-1}}^{\beta}$, we obtain the
generalized Jacobi condition of the $S$-expanded Lie multialgebra $\left(
\mathfrak{G,}\left[  ,...,\right]  _{S}\right)  $. In fact,%
\begin{align}
K_{\alpha_{1}\ldots\alpha_{2n-1}}^{\beta}\left(  \varepsilon_{A_{1}%
...A_{2n-1}}^{B_{1}...B_{2n-1}}C_{B_{1}...B_{n}}^{C}C_{CB_{n+1}...B_{2n-1}%
}^{D}\right)   &  =0\label{f4_2}\\
\varepsilon_{A_{1}...A_{2n-1}}^{B_{1}...B_{2n-1}}K_{\alpha_{1}\ldots\alpha
_{n}}^{\gamma}C_{B_{1}...B_{n}}^{C}K_{\gamma\alpha_{n+1}\ldots\alpha_{2n-1}%
}^{\beta}C_{CB_{n+1}...B_{2n-1}}^{D}  &  =0\nonumber\\
\varepsilon_{A_{1}...A_{2n-1}}^{B_{1}...B_{2n-1}}C_{\left(  B_{1},\alpha
_{1}\right)  ...\left(  B_{n},\alpha_{n}\right)  }^{\left(  C,\gamma\right)
}C_{\left(  C,\gamma\right)  \left(  B_{n+1},\alpha_{n+1}\right)  ...\left(
B_{2n-1},\alpha_{n+1}\right)  }^{D}  &  =0. \label{f5}%
\end{align}

\subsection{Multialgebra $0_{S}$-Reduced}

When the semigroup has a zero element $0_{S}\in S$, it plays a somewhat
peculiar role in the $S$-expanded Lie multialgebra. Let us span $S$ in nonzero
elements $\lambda_{i},i=0,\cdot\cdot\cdot,N$, and a zero element
$\lambda_{N+1}=0_{S},$ i.e.,%

\begin{equation}
S=\underset{\ \ \ \ \ \ \ \ \ \ \ \ \ \ \ \ \ \ \ \ \ \ }{\left\{
\underset{\ \lambda_{i}\ }{\underbrace{\lambda_{0},\lambda_{1},\ldots
,\lambda_{N}}},\underset{0_{S}}{\underbrace{\lambda_{N+1}}}\right\}  }\text{.}
\label{cr1}%
\end{equation}
Then, the $2$-selector satisfies
\[
K_{N+1,i_{2},...,i_{n}}^{\ \ \ \ \ \ \ \ \ \ \ \ \ \ j}=K_{\underset
{r}{\underbrace{N+1,...,N+1}},i_{r+1},...,i_{n}}%
^{\ \ \ \ \ \ \ \ \ \ \ \ \ \ \ \ \ \ \ \ \ \ \ \ \ \ \ \ \ \ j}%
=K_{\underset{r}{\underbrace{N+1,...,N+1}},i_{r+1},...,i_{n}}%
^{\ \ \ \ \ \ \ \ \ \ \ \ \ \ \ \ \ \ \ \ \ \ \ \ \ \ \ \ \ N+1}=\cdot
\cdot\cdot=K_{N+1,...,N+1}^{\ \ \ \ \ \ \ \ \ \ \ \ \ \ j}=0
\]%
\begin{equation}
K_{N+1,i_{2},...,i_{n}}^{\ \ \ \ \ \ \ \ \ \ \ \ \ \ \ \ \ \ \ N+1}%
=K_{N+1,...,N+1}^{\ \ \ \ \ \ \ \ \ \ \ \ \ \ \ \ \ \ \ \ N+1}=1. \label{cr2}%
\end{equation}
Therefore, the $S$-expanded multialgebra $\left(  \mathfrak{G,}\left[
,...,\right]  _{S}\right)  $ can be split as%
\[
\left[  T_{\left(  A_{1},i_{1}\right)  },\ldots,T_{\left(  A_{n},i_{n}\right)
}\right]  _{S}=K_{i_{1},...,i_{n}}^{\ \ \ \ \ \ \ \ k}C_{A_{1}...A_{n}%
}^{\ \ \ \ \ \ \ \ C}T_{\left(  C,k\right)  }+K_{i_{1},...,i_{n}%
}^{\ \ \ \ \ \ \ \ N+1}C_{A_{1},...,A_{n}}^{\ \ \ \ \ \ \ \ \ \ \ C}T_{\left(
C,N+1\right)  }%
\]%
\[
\left[  T_{\left(  A_{1},N+1\right)  },T_{\left(  A_{2},i_{2}\right)  }%
,\ldots,T_{\left(  A_{n},i_{n}\right)  }\right]  _{S}=C_{A_{1},...,A_{n}%
}^{\ \ \ \ \ \ \ \ \ \ \ C}T_{\left(  C,N+1\right)  }%
\]%
\[
\vdots
\]%
\[
\left[  T_{\left(  A_{1},N+1\right)  },\ldots,T_{\left(  A_{r},N+1\right)
},T_{\left(  A_{r+1},i_{r+1}\right)  },\ldots,T_{\left(  A_{n},i_{n}\right)
}\right]  _{S}=C_{A_{1},...,A_{n}}^{\ \ \ \ \ \ \ \ \ \ \ C}T_{\left(
C,N+1\right)  }%
\]%
\[
\vdots
\]%
\begin{equation}
\left[  T_{\left(  A_{1},N+1\right)  },\ldots,T_{\left(  A_{n},N+1\right)
}\right]  _{S}=C_{A_{1},...,A_{n}}^{\ \ \ \ \ \ \ \ \ \ \ C}T_{\left(
C,N+1\right)  }. \label{cr3}%
\end{equation}

From (\ref{cr3}) we can see that $\left(  \mathfrak{G,}\left[  ,...,\right]
_{S}\right)  $ can be written as $\mathfrak{G}=V_{0}\oplus V_{1}$, with
$V_{0}=\left\{  T_{\left(  A,i\right)  }\right\}  $,$\ V_{1}=\left\{
T_{\left(  A,N+1\right)  }\right\}  $. From (\ref{cr3}) we also see that
\begin{equation}
\left[  V_{1},V_{0},...,V_{0}\right]  _{S}\subset V_{1} \label{cr4}%
\end{equation}%
\begin{equation}
\underset{r\text{-times\ \ \ \ \ \ \ \ \ \ \ \ \ \ \ \ \ \ \ \ \ \ \ \ \ }%
}{\left[  \underbrace{V_{1},...,V_{1}},V_{0},...,V_{0}\right]  _{S}}\subset
V_{1}\text{, \ \ con }r=1,...,n. \label{cr5}%
\end{equation}
This means that the commutation relations
\[
\left[  T_{\left(  A_{1},i_{1}\right)  },\ldots,T_{\left(  A_{n},i_{n}\right)
}\right]  _{S}=K_{i_{1},...,i_{n}}^{\ \ \ \ \ \ \ \ k}C_{A_{1}...A_{n}%
}^{\ \ \ \ \ \ \ \ C}T_{\left(  C,k\right)  }%
\]
are those of a reduced Lie multialgebra $\left(  \mathfrak{G,}\left[
,...,\right]  _{S}\right)  $. \ From (\ref{cr3}) we see that the reduction
procedure in this particular case is equivalent to imposing the condition
\[
T_{\left(  C,N+1\right)  }=0_{S}T_{C}=0\text{.}%
\]

The above considerations motivate the following definition:

\begin{definition}
Let $S$ be an Abelian semigroup with a zero element $0_{S}\in S$, and let
$\left(  S\otimes\mathcal{G}\mathfrak{,}\left[  ,...,\right]  \right)  $ be an
$S$-expanded multialgebra. The multialgebra obtained by imposing the condition
$0_{S}T_{A}=0$ on $\mathfrak{G}$ is called $\ $a $0_{S}$-reduced multialgebra
of $\mathfrak{G}$.
\end{definition}

\section{S-expansion of submultialgebras}

In this section is shown that there are at least two ways of extracting
smaller multialgebras from $\left(  S\otimes\mathcal{G}\mathfrak{,}\left[
,...,\right]  \right)  .$ The first one gives rise to a "resonant
submultialgebra" \ while the second produces reduced multialgebras of a
resonant submultialgebra.

\subsection{Resonant submultialgebras}

The general problem of finding submultialgebras from an $S$-expanded
multialgebra is a nontrivial one, which is met and solved in this section. In
order to provide a solution, one must have some information about the subspace
structure of $\mathcal{G}\mathfrak{,}\left[  ,...,\right]  .$ This information
is encoded in the following way:

Let $\mathcal{G}=\oplus_{p\in I}V_{p}$ be a decomposition of $\mathcal{G}$\ in
subspaces $V_{p}$, where $I$ is a set of indices. \ For each $\left(
p_{1},...,p_{n}\right)  \in I$ it is always possible to define $i_{\left(
p_{1},...,p_{n}\right)  }\subset I$ such that
\begin{equation}
\left[  V_{p_{1}},...,V_{p_{n}}\right]  \subset%
%TCIMACRO{\dbigoplus \limits_{r\in i_{\left(  p_{1},...,p_{n}\right)  }}}%
%BeginExpansion
{\displaystyle\bigoplus\limits_{r\in i_{\left(  p_{1},...,p_{n}\right)  }}}
%EndExpansion
V_{r}\text{.} \label{msr1}%
\end{equation}
In this way, the subsets $\left\{  i_{\left(  p_{1},...,p_{n}\right)
}\right\}  $ store the information on the subspace structure of $\mathcal{G}$.

As for the Abelian semigroup $S$, this can always be decomposed as
$S=\cup_{p\in I}S_{p},$ where $S_{p}\subset S.$ In principle, this
decomposition is completely arbitrary; however, using the product from
definition $\left(  2.2\right)  $ of ref. \cite{sexpansion}, it is sometimes
possible to pick out a very particular choice of subset decomposition. This
choice is the subject of the following definition:

\begin{definition}
Let $\mathcal{G}=\oplus_{p\in I}V_{p}$ be a decomposition of $\mathcal{G}$\ in
subspaces $V_{p}$, with a structure described by the subsets $i_{\left(
p_{1},...,p_{n}\right)  },$ as in Eq.(\ref{msr1}). Let $S=\cup_{p\in I}S_{p}$
be a subset decomposition of the Abelian semigroup $S$ such that%
\begin{equation}
S_{p_{1}}\times S_{p_{2}}\times\cdot\cdot\cdot\cdot\times S_{p_{n}}\subset%
%TCIMACRO{\dbigcap \limits_{r\in i_{\left(  p_{1},...,p_{n}\right)  }}}%
%BeginExpansion
{\displaystyle\bigcap\limits_{r\in i_{\left(  p_{1},...,p_{n}\right)  }}}
%EndExpansion
S_{r}\text{.} \label{msr2}%
\end{equation}
When such a subset decomposition $S=\cup_{p\in I}S_{p}$ exists, then we say
that this decomposition is in resonance with the subspace decomposition of
$\mathcal{G}=\oplus_{p\in I}V_{p}$.
\end{definition}

\begin{theorem}
Let $\mathcal{G}=\oplus_{p\in I}V_{p}$ be a subspace decomposition of
$\mathcal{G},$with a structure described by Eq. \ (\ref{msr1}), and let
$S=\cup_{p\in I}S_{p}$ be a resonant subset decomposition of the Abelian
semigroup $S$, with the structure given in Eq.(\ref{msr2}). Define the
subspaces $W_{p}$ of $\mathfrak{G}=S\otimes\mathcal{G},$
\begin{equation}
W_{p}=S_{p}\otimes V_{p},\text{ }p\in I. \label{msr3}%
\end{equation}
Then,%
\begin{equation}
\mathfrak{G}_{R}=\oplus_{p\in I}W_{p} \label{msr4}%
\end{equation}
is called a resonant subalgebra of the S-expanded multialgebra $\mathfrak{G}%
=S\otimes\mathcal{G}$.
\end{theorem}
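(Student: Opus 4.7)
The plan is to verify directly that $\mathfrak{G}_R$ is closed under the $S$-expanded $n$-bracket $[\,,\ldots,\,]_S$, since the multilinearity, antisymmetry and generalized Jacobi identity on $\mathfrak{G}_R$ are then inherited automatically from the already-established $S$-expanded multialgebra $(\mathfrak{G},[\,,\ldots,\,]_S)$. So the whole content of the theorem reduces to showing that, whenever $X_i \in W_{p_i}$, the bracket $[X_1,\ldots,X_n]_S$ lies in $\bigoplus_{p\in I} W_p$.

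By linearity it is enough to take generators $X_i = T_{(A_i,\alpha_i)}$ with $T_{A_i}\in V_{p_i}$ and $\lambda_{\alpha_i}\in S_{p_i}$. Using the defining formula (\ref{f1}),
\begin{equation*}
[T_{(A_1,\alpha_1)},\ldots,T_{(A_n,\alpha_n)}]_S = \lambda_{\alpha_1}\cdots\lambda_{\alpha_n}\,[T_{A_1},\ldots,T_{A_n}].
\end{equation*}
The subspace structure (\ref{msr1}) of $\mathcal{G}$ guarantees that $[T_{A_1},\ldots,T_{A_n}]$ decomposes as a sum $\sum_{r\in i_{(p_1,\ldots,p_n)}} Y_r$ with $Y_r\in V_r$. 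In parallel, the resonance condition (\ref{msr2}) tells us that the semigroup product $\lambda_{\alpha_1}\cdots\lambda_{\alpha_n} = \lambda_\gamma$ lies in the intersection $\bigcap_{r\in i_{(p_1,\ldots,p_n)}} S_r$, i.e.\ in \emph{every} $S_r$ indexed by $i_{(p_1,\ldots,p_n)}$.

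Putting these two observations together, for each $r\in i_{(p_1,\ldots,p_n)}$ the term $\lambda_\gamma\, Y_r$ lies in $S_r \otimes V_r = W_r$. Consequently
\begin{equation*}
[T_{(A_1,\alpha_1)},\ldots,T_{(A_n,\alpha_n)}]_S \in \bigoplus_{r\in i_{(p_1,\ldots,p_n)}} W_r \subset \bigoplus_{p\in I} W_p = \mathfrak{G}_R,
\end{equation*}
which is precisely the closure statement. The generalized Jacobi identity and antisymmetry on $\mathfrak{G}_R$ then follow at once from their validity on the ambient $(\mathfrak{G},[\,,\ldots,\,]_S)$ established in the previous theorem, so $(\mathfrak{G}_R,[\,,\ldots,\,]_S)$ is a submultialgebra in the sense of Definition of submultialgebra.

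The only real subtlety, and what I would be most careful about, is the correct use of the \emph{intersection} in (\ref{msr2}): the resonance condition is strong enough precisely because $\lambda_\gamma$ simultaneously belongs to every $S_r$ that appears in the $\mathcal{G}$-side decomposition, so it can pair with whichever $V_r$-component happens to be nonzero in $[T_{A_1},\ldots,T_{A_n}]$. If one weakened (\ref{msr2}) to a union or to containment in a single $S_r$, the argument would break down, so highlighting this point in the write-up is the one nontrivial step in an otherwise bookkeeping proof.
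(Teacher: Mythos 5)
Your proof is correct and follows essentially the same route as the paper's: both establish closure by combining the subspace structure (\ref{msr1}) with the resonance condition (\ref{msr2}) and then using that the intersection $\bigcap_{s\in i_{(p_1,\ldots,p_n)}}S_s$ is contained in each individual $S_r$, so that each component lands in $W_r$. The only difference is that you phrase the argument on basis elements while the paper writes it at the level of the subspaces $W_{p_i}=S_{p_i}\otimes V_{p_i}$, which is purely cosmetic.
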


\begin{proof}
Using Eqs. (\ref{msr1}) and (\ref{msr2}) we have
\[
\left[  W_{p_{1}},...,W_{p_{n}}\right]  _{S}=\left[  S_{p_{1}}\otimes
V_{p_{1}},...,S_{p_{n}}\otimes V_{p_{n}}\right]  _{S}=\left(  S_{p_{1}}%
\times...\times S_{p_{n}}\right)  \otimes\left[  V_{p_{1}},...,V_{p_{n}%
}\right]
\]%
\begin{equation}
\subset\left(
%TCIMACRO{\dbigcap \limits_{s\in i_{\left(  p_{1},...,p_{n}\right)  }}}%
%BeginExpansion
{\displaystyle\bigcap\limits_{s\in i_{\left(  p_{1},...,p_{n}\right)  }}}
%EndExpansion
S_{s}\right)  \otimes\left(
%TCIMACRO{\dbigoplus \limits_{r\in i_{\left(  p_{1},...,p_{n}\right)  }}}%
%BeginExpansion
{\displaystyle\bigoplus\limits_{r\in i_{\left(  p_{1},...,p_{n}\right)  }}}
%EndExpansion
V_{r}\right)  =%
%TCIMACRO{\dbigoplus \limits_{r\in i_{\left(  p_{1},...,p_{n}\right)  }}}%
%BeginExpansion
{\displaystyle\bigoplus\limits_{r\in i_{\left(  p_{1},...,p_{n}\right)  }}}
%EndExpansion
\left(
%TCIMACRO{\dbigcap \limits_{s\in i_{\left(  p_{1},...,p_{n}\right)  }}}%
%BeginExpansion
{\displaystyle\bigcap\limits_{s\in i_{\left(  p_{1},...,p_{n}\right)  }}}
%EndExpansion
S_{s}\right)  \otimes V_{r}. \label{mrs6}%
\end{equation}
But, it is clear that for each $r\in i_{\left(  p_{1},...,p_{n}\right)  }$ one
can write
\begin{equation}%
%TCIMACRO{\dbigcap \limits_{s\in i_{\left(  p_{1},...,p_{n}\right)  }}}%
%BeginExpansion
{\displaystyle\bigcap\limits_{s\in i_{\left(  p_{1},...,p_{n}\right)  }}}
%EndExpansion
S_{s}\subset S_{r}. \label{msr7}%
\end{equation}
Then,
\[
\left[  W_{p_{1}},...,W_{p_{n}}\right]  _{S}\subset%
%TCIMACRO{\dbigoplus \limits_{r\in i_{\left(  p_{1},...,p_{n}\right)  }}}%
%BeginExpansion
{\displaystyle\bigoplus\limits_{r\in i_{\left(  p_{1},...,p_{n}\right)  }}}
%EndExpansion
S_{r}\otimes V_{r}=%
%TCIMACRO{\dbigoplus \limits_{r\in i_{\left(  p_{1},...,p_{n}\right)  }}}%
%BeginExpansion
{\displaystyle\bigoplus\limits_{r\in i_{\left(  p_{1},...,p_{n}\right)  }}}
%EndExpansion
W_{r}%
\]%
\[
\left[  W_{p_{1}},...,W_{p_{n}}\right]  _{S}\subset%
%TCIMACRO{\dbigoplus \limits_{r\in i_{\left(  p_{1},...,p_{n}\right)  }}}%
%BeginExpansion
{\displaystyle\bigoplus\limits_{r\in i_{\left(  p_{1},...,p_{n}\right)  }}}
%EndExpansion
S_{r}\otimes V_{r}=%
%TCIMACRO{\dbigoplus \limits_{r\in i_{\left(  p_{1},...,p_{n}\right)  }}}%
%BeginExpansion
{\displaystyle\bigoplus\limits_{r\in i_{\left(  p_{1},...,p_{n}\right)  }}}
%EndExpansion
W_{r}%
\]%
\begin{equation}
\left[  W_{p_{1}},...,W_{p_{n}}\right]  _{S}\subset%
%TCIMACRO{\dbigoplus \limits_{r\in I}}%
%BeginExpansion
{\displaystyle\bigoplus\limits_{r\in I}}
%EndExpansion
W_{r}=\mathfrak{G}_{R} \label{mrs8}%
\end{equation}

\end{proof}

Therefore, the algebra closes and $\mathfrak{G}_{R}$ is a submultialgebra of
$\mathfrak{G}$.

This theorem translates the difficult problem of finding subalgebras from an
$S$-expanded algebra $\mathfrak{G}=S\otimes\mathfrak{g}$ into that of finding
a resonant partition for the semigroup $S$.

Denoting the basis of $V_{p_{i}}$ by $\left\{  T_{a_{p_{i}}}\right\}  $,
$\lambda_{\alpha_{p_{i}}}\in S_{p_{i}}$ and $T_{\left(  a_{p_{i}}%
,\alpha_{p_{i}}\right)  }=\lambda_{\alpha_{p_{i}}}T_{a_{p_{i}}}\in W_{p_{i}}$
one can write%
\[
\left[  T_{\left(  a_{p_{1}},\alpha_{p_{1}}\right)  },...,T_{\left(  a_{p_{n}%
},\alpha_{p_{n}}\right)  }\right]  _{S}=C_{\left(  a_{p_{1}},\alpha_{p_{1}%
}\right)  ...\left(  a_{p_{n}},\alpha_{p_{n}}\right)  }%
^{\ \ \ \ \ \ \ \ \ \ \ \ \ \ \ \ \ \ \ \ \ \ \ \ \ \ \left(  c_{r},\gamma
_{r}\right)  }T_{\left(  c_{r},\gamma_{r}\right)  }\text{,}%
\]
which means that the structure constants of the resonant submultialgebra are
given by%
\[
C_{\left(  a_{p_{1}},\alpha_{p_{1}}\right)  ...\left(  a_{p_{n}},\alpha
_{p_{n}}\right)  }%
^{\ \ \ \ \ \ \ \ \ \ \ \ \ \ \ \ \ \ \ \ \ \ \ \ \ \ \left(  c_{r},\gamma
_{r}\right)  }=K_{\alpha_{p_{1}}...\alpha_{p_{n}}}^{\ \ \ \ \ \ \ \ \ \ \gamma
_{r}}C_{a_{p_{1}}...a_{p_{n}}}^{\ \ \ \ \ \ \ \ \ \ \ c_{r}}\text{.}%
\]

An interesting fact is that the S-expanded multialgebra "subspace structure"
encoded in $i_{\left(  p_{1},...,p_{n}\right)  }$ is the same as in the
original multialgebra, as can be observed fron Eq. (\ref{mrs8}).

\subsection{Reduced Multialgebras of a Resonant Submultialgebra}

The following theorem provides necessary conditions under which a reduced
multialgebra can be extracted from a resonant subalgebra:

\begin{theorem}
Let $\mathfrak{G}_{R}=\oplus_{p\in I}S_{p}\otimes V_{p}$ be a resonant
submultialgebra $\left(  \mathfrak{G,}\left[  ,...,\right]  _{S}\right)  $,
i.e., let Eqs. (\ref{msr1}) and (\ref{msr2}) be satisfied. Let $S_{p}=\hat
{S}_{p}\cup\check{S}_{p}$ be a partition of the subsets $S_{p}\subset S$ such
that%
\begin{equation}
\check{S}_{p_{i}}\cap\hat{S}_{p_{i}}=\phi\label{rrm1}%
\end{equation}%
\begin{equation}
\hat{S}_{p_{1}}\times\check{S}_{p_{2}}\times...\times\check{S}_{p_{n}}\subset%
%TCIMACRO{\dbigcap \limits_{r\in i_{\left(  p_{1},...,p_{n}\right)  }}}%
%BeginExpansion
{\displaystyle\bigcap\limits_{r\in i_{\left(  p_{1},...,p_{n}\right)  }}}
%EndExpansion
\hat{S}_{r}. \label{rrm2}%
\end{equation}
The conditions (\ref{rrm1}) and (\ref{rrm2}) induce the decomposition
$\mathfrak{G}_{R}=\mathfrak{\check{G}}_{R}\oplus\overset{\wedge}{\mathfrak{G}%
}_{R}$ on the resonant subalgebra, where
\begin{equation}
\mathfrak{\check{G}}_{R}=\oplus_{p\in I}\check{S}_{p}\otimes V_{p}
\label{rrm3}%
\end{equation}%
\begin{equation}
\overset{\wedge}{\mathfrak{G}}_{R}=\oplus_{p\in I}\hat{S}_{p}\otimes V_{p}.
\label{rrm4}%
\end{equation}
When conditions (\ref{rrm1}) and (\ref{rrm2}) hold, then%
\begin{equation}
\left[  \overset{\wedge}{\mathfrak{G}}_{R},\mathfrak{\check{G}}_{R}%
,...,\mathfrak{\check{G}}_{R}\right]  _{S}\subset\overset{\wedge}%
{\mathfrak{G}}_{R} \label{rrm5}%
\end{equation}
and therefore $\left\vert \mathfrak{\check{G}}_{R}\right\vert $\ corresponds
to a reduced algebra of $\ \mathfrak{G}_{R}$.
\end{theorem}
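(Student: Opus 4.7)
The plan is to mimic the strategy of the Reduced Multialgebra theorem from Section~2, applied at the level of the resonant pieces $\hat{S}_p\otimes V_p$ and $\check{S}_p\otimes V_p$ instead of $V_0$ and $V_1$. First I would check that (\ref{rrm1}), together with $S_p=\hat{S}_p\cup\check{S}_p$, really gives a direct sum $W_p=\hat{S}_p\otimes V_p \,\oplus\, \check{S}_p\otimes V_p$ for each $p\in I$, so that summing over $p$ produces the desired decomposition $\mathfrak{G}_R=\mathfrak{\check{G}}_R\oplus \overset{\wedge}{\mathfrak{G}}_R$. This is the straightforward bookkeeping part.

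The core step is to verify the inclusion (\ref{rrm5}). By multilinearity of $[,\ldots,]_S$ it suffices to evaluate the bracket on generators $T_{(a_{p_1},\hat\alpha_{p_1})}\in \hat{S}_{p_1}\otimes V_{p_1}$ and $T_{(a_{p_i},\check\alpha_{p_i})}\in \check{S}_{p_i}\otimes V_{p_i}$ for $i=2,\ldots,n$. Using (\ref{f1}) I can factor the bracket as
\[
\bigl[T_{(a_{p_1},\hat\alpha_{p_1})},T_{(a_{p_2},\check\alpha_{p_2})},\ldots,T_{(a_{p_n},\check\alpha_{p_n})}\bigr]_S
=\bigl(\hat\alpha_{p_1}\check\alpha_{p_2}\cdots\check\alpha_{p_n}\bigr)\,[T_{a_{p_1}},\ldots,T_{a_{p_n}}].
\]
The multialgebra factor lies in $\bigoplus_{r\in i_{(p_1,\ldots,p_n)}} V_r$ by (\ref{msr1}), while (\ref{rrm2}) forces the semigroup factor into $\bigcap_{r\in i_{(p_1,\ldots,p_n)}}\hat{S}_r$. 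Combining these two containments exactly as in the proof of the resonant-submultialgebra theorem (\ref{mrs6})--(\ref{mrs8}) gives
\[
\bigl[\overset{\wedge}{\mathfrak{G}}_R,\mathfrak{\check{G}}_R,\ldots,\mathfrak{\check{G}}_R\bigr]_S
\subset \bigoplus_{r\in i_{(p_1,\ldots,p_n)}}\hat{S}_r\otimes V_r
\subset \overset{\wedge}{\mathfrak{G}}_R,
\]
which is (\ref{rrm5}).

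Once (\ref{rrm5}) is established, I would invoke the Reduced Multialgebra theorem of Section~2, with $V_0\leftrightarrow\mathfrak{\check{G}}_R$ and $V_1\leftrightarrow\overset{\wedge}{\mathfrak{G}}_R$. That theorem guarantees that the restricted bracket on $\mathfrak{\check{G}}_R$ satisfies the generalised Jacobi condition by itself, so $\mathfrak{\check{G}}_R$ with the inherited structure constants is a higher-order Lie algebra, and by Definition~7 it is precisely the reduced multialgebra $|\mathfrak{\check{G}}_R|$ of $\mathfrak{G}_R$.

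The main obstacle I anticipate is not conceptual but notational: one must carefully justify that the hypothesis (\ref{rrm2}), stated only with the hatted factor in the first slot, suffices for every slot. This is handled by the antisymmetry of the multibracket, which permits moving the single $\overset{\wedge}{\mathfrak{G}}_R$-entry to the first position at the cost of a sign, together with the commutativity of $S$ that makes the semigroup product insensitive to the reordering. Once that symmetry remark is in place, the chain of inclusions is a direct transcription of (\ref{mrs6})--(\ref{mrs8}).
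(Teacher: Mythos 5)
Your proposal is correct and follows essentially the same route as the paper: the same chain of inclusions $\bigl[\hat{S}_{p_1}\otimes V_{p_1},\check{S}_{p_2}\otimes V_{p_2},\ldots\bigr]_S\subset\bigl(\bigcap_s\hat{S}_s\bigr)\otimes\bigl(\bigoplus_r V_r\bigr)\subset\bigoplus_r\hat{S}_r\otimes V_r\subset\overset{\wedge}{\mathfrak{G}}_R$, followed by an appeal to the reduced-multialgebra theorem with $V_0\leftrightarrow\mathfrak{\check{G}}_R$ and $V_1\leftrightarrow\overset{\wedge}{\mathfrak{G}}_R$. Your extra remark on using antisymmetry of the bracket and commutativity of $S$ to move the hatted entry to the first slot is a small clarification the paper leaves implicit, but it does not change the argument.
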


\begin{proof}
$\hat{W}_{p_{i}}=\hat{S}_{p_{i}}\otimes V_{p_{i}}$ and $\check{W}_{p_{i}%
}=\check{S}_{p_{i}}\otimes V_{p_{i}}$ . Then, using condition (\ref{rrm2}), we have:%

\begin{align*}
\left[  \hat{W}_{p_{1}},\check{W}_{p_{2}},...,\check{W}_{p_{n}}\right]  _{S}
&  =\left[  \hat{S}_{p_{1}}\otimes V_{p_{1}},\check{S}_{p_{2}}\otimes
V_{p_{2}},...,\check{S}_{p_{n}}\otimes V_{p_{n}}\right]  _{S}\\
&  =\left(  \hat{S}_{p_{1}}\times\check{S}_{p_{2}}\times...\times\check
{S}_{p_{n}}\right)  \otimes\left[  V_{p_{1}},V_{p_{2}},...,V_{p_{n}}\right] \\
&  \subset\left(
%TCIMACRO{\dbigcap \limits_{s\in i_{\left(  p_{1},...,p_{n}\right)  }}}%
%BeginExpansion
{\displaystyle\bigcap\limits_{s\in i_{\left(  p_{1},...,p_{n}\right)  }}}
%EndExpansion
\hat{S}_{s}\right)  \otimes\left(
%TCIMACRO{\dbigoplus \limits_{r\in i_{\left(  p_{1},...,p_{n}\right)  }}}%
%BeginExpansion
{\displaystyle\bigoplus\limits_{r\in i_{\left(  p_{1},...,p_{n}\right)  }}}
%EndExpansion
V_{r}\right) \\
&  =%
%TCIMACRO{\dbigoplus \limits_{r\in i_{\left(  p_{1},...,p_{n}\right)  }}}%
%BeginExpansion
{\displaystyle\bigoplus\limits_{r\in i_{\left(  p_{1},...,p_{n}\right)  }}}
%EndExpansion
\left(
%TCIMACRO{\dbigcap \limits_{s\in i_{\left(  p_{1},...,p_{n}\right)  }}}%
%BeginExpansion
{\displaystyle\bigcap\limits_{s\in i_{\left(  p_{1},...,p_{n}\right)  }}}
%EndExpansion
\hat{S}_{s}\right)  \otimes V_{r}.
\end{align*}
For each $r\in i_{\left(  p_{1},...,p_{n}\right)  }$ we have%
\[%
%TCIMACRO{\dbigcap \limits_{s\in i_{\left(  p_{1},...,p_{n}\right)  }}}%
%BeginExpansion
{\displaystyle\bigcap\limits_{s\in i_{\left(  p_{1},...,p_{n}\right)  }}}
%EndExpansion
\hat{S}_{s}\subset\hat{S}_{r}%
\]
so that,
\begin{align*}
\left[  \hat{W}_{p_{1}},\check{W}_{p_{2}},...,\check{W}_{p_{n}}\right]  _{S}
&  \subset%
%TCIMACRO{\dbigoplus \limits_{r\in i_{\left(  p_{1},...,p_{n}\right)  }}}%
%BeginExpansion
{\displaystyle\bigoplus\limits_{r\in i_{\left(  p_{1},...,p_{n}\right)  }}}
%EndExpansion
\hat{S}_{r}\otimes V_{r}=%
%TCIMACRO{\dbigoplus \limits_{r\in i_{\left(  p_{1},...,p_{n}\right)  }}}%
%BeginExpansion
{\displaystyle\bigoplus\limits_{r\in i_{\left(  p_{1},...,p_{n}\right)  }}}
%EndExpansion
\hat{W}_{r}\\
&  \subset%
%TCIMACRO{\dbigoplus \limits_{r\in I}}%
%BeginExpansion
{\displaystyle\bigoplus\limits_{r\in I}}
%EndExpansion
\hat{W}_{r}=\overset{\wedge}{\mathfrak{G}}_{R}.
\end{align*}
Thus $\left[  \hat{W}_{p_{1}},\check{W}_{p_{2}},...,\check{W}_{p_{n}}\right]
_{S}\subset\overset{\wedge}{\mathfrak{G}}_{R}$, i.e,%
\[
\left[  \overset{\wedge}{\mathfrak{G}}_{R},\mathfrak{\check{G}}_{R}%
,...,\mathfrak{\check{G}}_{R}\right]  _{S}\subset\overset{\wedge}%
{\mathfrak{G}}_{R}%
\]
and therefore $\left\vert \mathfrak{\check{G}}_{R}\right\vert $\ is a reduced
algebra of $\mathfrak{G}_{R}$.
\end{proof}

The structure constants for the reduced algebra $\left\vert \mathfrak{\check
{G}}_{R}\right\vert $\ are given by,%
\[
C_{\left(  a_{p_{1}},\alpha_{p_{1}}\right)  ...\left(  a_{p_{n}},\alpha
_{p_{n}}\right)  }%
^{\ \ \ \ \ \ \ \ \ \ \ \ \ \ \ \ \ \ \ \ \ \ \ \ \ \ \left(  c_{r},\gamma
_{r}\right)  }=K_{\alpha_{p_{1}}...\alpha_{p_{n}}}^{\ \ \ \ \ \ \ \ \ \ \gamma
_{r}}C_{a_{p_{1}}...a_{p_{n}}}^{\ \ \ \ \ \ \ \ \ \ \ c_{r}}%
\]
with $\alpha_{p_{i}}$, $\gamma_{r}$ such that $\lambda_{\alpha_{p_{i}}}%
\in\check{S}_{p_{i}}$ y $\lambda_{\gamma_{r}}\in\check{S}_{p_{r}}$.

\subsection{$S_{E}^{\left(  N\right)  }$-Expansion of Multialgebras}

\begin{definition}
Let us define $S_{E}^{\left(  N\right)  }$ as the semigroup of elements
\footnote{where the order of the multialgebra is denoted by $n$ and $N$
denotes the number of elements of the semigroup $S_{E}^{\left(  N\right)  }$.}%
\begin{equation}
S_{E}^{\left(  N\right)  }=\left\{  \lambda_{\alpha}\text{, }\alpha
=0,...,N+1\right\}  \label{b1}%
\end{equation}
provided with a multiplication rule%
\begin{equation}
\lambda_{\alpha}\lambda_{\beta}=\lambda_{H_{N+1}\left(  \alpha+\beta\right)
}=\delta_{H_{N+1}\left(  \alpha+\beta\right)  }^{\gamma}\lambda_{\gamma}
\label{b2}%
\end{equation}
where $H_{N+1}$ is defined as the function
\begin{equation}
H_{n}\left(  x\right)  =\left\{
\begin{array}
[c]{c}%
x\text{, when }x<n,\\
n\text{, when }x\geq n.
\end{array}
\right\}  . \label{b3}%
\end{equation}

\end{definition}

The two-selectors for $S_{E}^{\left(  N\right)  }$ read
\[
K_{\alpha\beta}^{\gamma}=\delta_{H_{N+1}\left(  \alpha+\beta\right)  }%
^{\gamma}%
\]

where $\delta_{\sigma}^{\rho}$ is the Kronecker delta.

The multiplication rule (\ref{b2}) can be directly generalized to
\begin{align}
\lambda_{\alpha_{1}}....\lambda_{\alpha_{n}}  &  =\lambda_{H_{N+1}\left(
\alpha_{1}+...+\alpha_{n}\right)  }=\delta_{H_{N+1}\left(  \alpha
_{1}+...+\alpha_{n}\right)  }^{\gamma}\lambda_{\gamma}\label{b4}\\
K_{\alpha_{1}...\alpha_{n}}^{\ \ \ \ \ \ \ \ \ \gamma}  &  =\delta
_{H_{N+1}\left(  \alpha_{1}+...+\alpha_{n}\right)  }^{\gamma}.\nonumber
\end{align}
From Eq.(\ref{b2}), we have that $\lambda_{N+1}$ is the zero element in
$\ S_{E}^{\left(  N\right)  },$ i.e., $\ \lambda_{N+1}=0_{S}$.

The corresponding $S$-expanded multialgebra is given by the following
commutation relation:%
\begin{equation}
\left[  T_{\left(  A_{1},\alpha_{1}\right)  },...,T_{\left(  A_{n},\alpha
_{n}\right)  }\right]  _{S}=\delta_{H_{N+1}\left(  \alpha_{1}+...+\alpha
_{n}\right)  }^{\gamma}C_{A_{1}...A_{n}}^{\ \ \ \ \ \ C}T_{\left(
C,\gamma\right)  }, \label{b5}%
\end{equation}
which implies that the structure constants for the $S_{E}^{\left(  N\right)
}$-expanded multialgebra can be written as
\begin{equation}
C_{\left(  A_{1},\alpha_{1}\right)  ...\left(  A_{n},\alpha_{n}\right)
}^{\ \ \ \ \ \ \ \ \ \ \ \ \ \ \ \ \ \ \ \left(  C,\gamma\right)  }%
=\delta_{H_{N+1}\left(  \alpha_{1}+...+\alpha_{n}\right)  }^{\gamma}%
C_{A_{1}...A_{n}}^{\ \ \ \ \ \ C} \label{b6}%
\end{equation}

with $\gamma,\alpha_{1},...,\alpha_{n}=0,\cdot\cdot\cdot,N+1.$ When the
condition of $0_{S}$-reduction is imposed, the Eq. (\ref{b6}) reduces to%
\[
C_{\left(  A_{1},i_{1}\right)  ...\left(  A_{n},i_{n}\right)  }%
^{\ \ \ \ \ \ \ \ \ \ \ \ \ \ \ \ \ \ \ \left(  C,k\right)  }=\delta
_{H_{N+1}\left(  i_{1}+...+i_{n}\right)  }^{k}C_{A_{1}...A_{n}}%
^{\ \ \ \ \ \ C}\text{.}%
\]

\section{Comments}

We have shown that the successful $S$-expansion of the Lie algebras method,
developed in ref. \cite{sexpansion}, can be generalized so as to obtain
expanded higher-order Lie algebras.

The main results of this paper are: the generalizations of the definitions of
Lie subalgebras and reduced Lie algebras to higher-order Lie subalgebras and
higher-order reduced Lie algebras; to generalize the S-expansion method and to
show that it is possible to obtain higher-order expanded Lie algebras, as well
as to probe that under determined conditions can be extracted
relevant higher-order Lie subalgebras from the S-expanded higher-order Lie algebras.

\bigskip

This work was supported in part by FONDECYT through Grants \#s 1080530 and
1070306 and in part by Direcci\'{o}n de Investigaci\'{o}n, Universidad de
Concepci\'{o}n through Grant \# 208.011.048-1.0. One of the authors (P.S) wish
to thank J.A. de Azcarraga for his kind hospitality at the Departament of
Theoretical Physics of Valencia University and many enlightening
discussions.Two of the authors (R.C. and N.M) were supported by grants from
the Comisi\'{o}n Nacional de Investigaci\'{o}n Cient\'{\i}fica y
Tecnol\'{o}gica CONICYT and from the Universidad de Concepci\'{o}n, Chile.

\section{Appendix A}

In this appendix we show that the realization (\ref{m2}) of the multibracket
satisfies the identity%

\begin{align}
&  \frac{1}{\left(  n-1\right)  !}\frac{1}{n!}%
%TCIMACRO{\dsum \limits_{\sigma\in S_{2n-1}}}%
%BeginExpansion
{\displaystyle\sum\limits_{\sigma\in S_{2n-1}}}
%EndExpansion
\left(  -1\right)  ^{\pi\left(  \sigma\right)  }\left[  \left[  T_{A_{\sigma
\left(  1\right)  }},...,T_{A_{\sigma\left(  n\right)  }}\right]
,T_{A_{\sigma\left(  n+1\right)  }},...,T_{A_{\sigma\left(  2n-1\right)  }%
}\right] \label{ap1}\\
&  =\left\{
\begin{array}
[c]{c}%
0\text{ \ \ \ \ \ \ \ \ \ \ \ \ \ \ \ \ \ \ \ \ \ \ \ \ \ \ \ ,\ }n\text{
even}\\
n\left[  T_{A_{1}},...,T_{A_{2n-1}}\right]  \text{,\ \ \ \ \ \ }n\text{ odd.}%
\end{array}
\right\} \nonumber
\end{align}
which can be re-written in the following way:%
\begin{align}
&  \frac{1}{\left(  n-1\right)  !}\frac{1}{n!}\varepsilon_{A_{1}...A_{2n-1}%
}^{B_{1}...B_{2n-1}}\left[  \left[  T_{B_{1}},...,T_{B_{n}}\right]
,T_{B_{n+1}},...,T_{B_{2n-1}}\right] \label{ap2}\\
&  =\left\{
\begin{array}
[c]{c}%
0\text{ \ \ \ \ \ \ \ \ \ \ \ \ \ \ \ \ \ \ \ \ \ \ \ \ \ \ \ ,\ }n\text{
even}\\
nn!\left(  n-1\right)  !\left[  T_{A_{1}},...,T_{A_{2n-1}}\right]
\text{,\ }n\text{ odd.}%
\end{array}
\right\}  .\nonumber
\end{align}
In fact, if
\begin{equation}
\varphi=\varepsilon_{A_{1}...A_{2n-1}}^{B_{1}...B_{2n-1}}\left[  \left[
T_{B_{1}},...,T_{B_{n}}\right]  ,T_{B_{n+1}},...,T_{B_{2n-1}}\right]  \text{,}
\label{ap2_2}%
\end{equation}
then%
\begin{align}
\varphi &  =\varepsilon_{A_{1}...A_{2n-1}}^{B_{1}...B_{2n-1}}\left[
\varepsilon_{B_{1}...B_{n}}^{C_{1}...C_{n}}T_{C_{1}}...T_{C_{n}},T_{B_{n+1}%
},...,T_{B_{2n-1}}\right] \label{ap3}\\
&  =\varepsilon_{A_{1}...A_{2n-1}}^{B_{1}...B_{2n-1}}\varepsilon
_{B_{1}...B_{n}}^{C_{1}...C_{n}}\left[  T_{C_{1}}...T_{C_{n}},T_{B_{n+1}%
},...,T_{B_{2n-1}}\right] \nonumber\\
&  =n!\varepsilon_{A_{1}..................A_{2n-1}}^{C_{1}...C_{n}%
B_{n+1}...B_{2n-1}}\left[  T_{C_{1}}...T_{C_{n}},T_{B_{n+1}},...,T_{B_{2n-1}%
}\right] \nonumber
\end{align}
where we have used Eq.(\ref{m2}) and the property%
\begin{equation}
\varepsilon_{h_{1}...h_{r}}^{i_{1}...i_{r}}B^{h_{1}...h_{r}}=r!B^{i_{1}%
...i_{r}}\text{.} \label{ap4}%
\end{equation}
We consider now the multibracket $\left[  T_{C_{1}}...T_{C_{n}},T_{B_{n+1}%
},...,T_{B_{2n-1}}\right]  $. The expression $T_{C_{1}}...T_{C_{n}}$ is the
matrix product of $n$ elements, and therefore is a mapping onto another
element of $\mathcal{G},$ which must be antisymmetrized together with
$T_{B_{n+1}},...,T_{B_{2n-1}}$. Thus, we can write%
\begin{align}
&  \left[  T_{C_{1}}...T_{C_{n}},T_{B_{n+1}},...,T_{B_{2n-1}}\right]
\label{ap5}\\
&  =\varepsilon_{B_{n+1}...B_{2n-1}}^{C_{n+1}...C_{2n-1}}%
%TCIMACRO{\dsum \limits_{s=0}^{n-1}}%
%BeginExpansion
{\displaystyle\sum\limits_{s=0}^{n-1}}
%EndExpansion
\left(  -1\right)  ^{s}T_{C_{n+1}}...T_{C_{n+s}}T_{C_{1}}...T_{C_{n}%
}T_{C_{n+s+1}}...T_{C_{2n-1}}\nonumber
\end{align}
where the $n-1$ elements $T_{B_{n+1}},...,T_{B_{2n-1}}$ are antisymmetrized
with the contraction with $\varepsilon_{B_{n+1}...B_{2n-1}}^{C_{n+1}%
...C_{2n-1}}$ and the element $T_{C_{1}}...T_{C_{n}}$ is is antisymmetrized
with $%
%TCIMACRO{\tsum }%
%BeginExpansion
{\textstyle\sum}
%EndExpansion
$ . Introducing these results into (\ref{ap3}) we have%
\begin{align}
\varphi &  =n!\varepsilon_{A_{1}..................A_{2n-1}}^{C_{1}%
...C_{n}B_{n+1}...B_{2n-1}}\varepsilon_{B_{n+1}...B_{2n-1}}^{C_{n+1}%
...C_{2n-1}}\label{ap6}\\
&  \times%
%TCIMACRO{\dsum \limits_{s=0}^{n-1}}%
%BeginExpansion
{\displaystyle\sum\limits_{s=0}^{n-1}}
%EndExpansion
\left(  -1\right)  ^{s}T_{C_{n+1}}...T_{C_{n+s}}T_{C_{1}}...T_{C_{n}%
}T_{C_{n+s+1}}...T_{C_{2n-1}}\nonumber\\
&  =n!\left(  n-1\right)  !\varepsilon_{A_{1}...A_{2n-1}}^{C_{1}...C_{2n-1}%
}\nonumber\\
&  \times%
%TCIMACRO{\dsum \limits_{s=0}^{n-1}}%
%BeginExpansion
{\displaystyle\sum\limits_{s=0}^{n-1}}
%EndExpansion
\left(  -1\right)  ^{s}T_{C_{n+1}}...T_{C_{n+s}}T_{C_{1}}...T_{C_{n}%
}T_{C_{n+s+1}}...T_{C_{2n-1}}\nonumber\\
&  =n!\left(  n-1\right)  !\nonumber\\
&  \times%
%TCIMACRO{\dsum \limits_{s=0}^{n-1}}%
%BeginExpansion
{\displaystyle\sum\limits_{s=0}^{n-1}}
%EndExpansion
\left(  -1\right)  ^{s}\varepsilon_{A_{1}...A_{2n-1}}^{C_{1}...C_{2n-1}%
}T_{C_{n+1}}...T_{C_{n+s}}T_{C_{1}}...T_{C_{n}}T_{C_{n+s+1}}...T_{C_{2n-1}%
}\nonumber
\end{align}
where we have used the identity (\ref{ap4}). Since%
\begin{align}
&  \varepsilon_{A_{1}...A_{2n-1}}^{C_{1}...C_{2n-1}}T_{C_{n+1}}...T_{C_{n+s}%
}T_{C_{1}}...T_{C_{n}}T_{C_{n+s+1}}...T_{C_{2n-1}}\label{ap7}\\
&  =\left(  -1\right)  ^{s}\varepsilon_{A_{1}...A_{2n-1}}^{C_{1}...C_{2n-1}%
}T_{C_{1}}T_{C_{n+1}}...T_{C_{n+s}}T_{C_{2}}...T_{C_{n}}T_{C_{n+s+1}%
}...T_{C_{2n-1}}\nonumber\\
&  =\left(  -1\right)  ^{s}\left(  -1\right)  ^{s}\varepsilon_{A_{1}%
...A_{2n-1}}^{C_{1}...C_{2n-1}}T_{C_{1}}T_{C_{2}}T_{C_{n+1}}...T_{C_{n+s}%
}T_{C_{3}}...T_{C_{n}}T_{C_{n+s+1}}...T_{C_{2n-1}}\nonumber\\
&  \vdots\nonumber\\
&  =\left(  -1\right)  ^{ns}\varepsilon_{A_{1}...A_{2n-1}}^{C_{1}...C_{2n-1}%
}T_{C_{1}}...T_{C_{n}}T_{C_{n+1}}...T_{C_{n+s}}T_{C_{n+s+1}}...T_{C_{2n-1}%
}\nonumber\\
&  =\left(  -1\right)  ^{ns}\varepsilon_{A_{1}...A_{2n-1}}^{C_{1}...C_{2n-1}%
}T_{C_{1}}...T_{C_{2n-1}},\nonumber
\end{align}
we have that (\ref{ap6}) takes the form%
\begin{align*}
\varphi &  =n!\left(  n-1\right)  !%
%TCIMACRO{\dsum \limits_{s=0}^{n-1}}%
%BeginExpansion
{\displaystyle\sum\limits_{s=0}^{n-1}}
%EndExpansion
\left(  -1\right)  ^{s}\left(  -1\right)  ^{ns}\varepsilon_{A_{1}...A_{2n-1}%
}^{C_{1}...C_{2n-1}}T_{C_{1}}...T_{C_{2n-1}}\\
&  =n!\left(  n-1\right)  !\varepsilon_{A_{1}...A_{2n-1}}^{C_{1}...C_{2n-1}%
}T_{C_{1}}...T_{C_{2n-1}}%
%TCIMACRO{\dsum \limits_{s=0}^{n-1}}%
%BeginExpansion
{\displaystyle\sum\limits_{s=0}^{n-1}}
%EndExpansion
\left(  -1\right)  ^{s}\left(  -1\right)  ^{ns}\\
&  =n!\left(  n-1\right)  !\left[  T_{A_{1}},...,T_{A_{2n-1}}\right]
%TCIMACRO{\dsum \limits_{s=0}^{n-1}}%
%BeginExpansion
{\displaystyle\sum\limits_{s=0}^{n-1}}
%EndExpansion
\left(  -1\right)  ^{s\left(  n+1\right)  }.
\end{align*}
It is direct to check that%
\[%
%TCIMACRO{\dsum \limits_{s=0}^{n-1}}%
%BeginExpansion
{\displaystyle\sum\limits_{s=0}^{n-1}}
%EndExpansion
\left(  -1\right)  ^{s\left(  n+1\right)  }=\left\{
\begin{array}
[c]{c}%
0\text{, for }n\text{ even\ \ \ \ }\\
n\text{, for }n\text{ odd}%
\end{array}
\right\}  \text{.}%
\]
Using (\ref{ap2_2}) we find%
\begin{align*}
&  \frac{1}{n!}\frac{1}{\left(  n-1\right)  !}\varepsilon_{A_{1}...A_{2n-1}%
}^{B_{1}...B_{2n-1}}\left[  \left[  T_{B_{1}},...,T_{B_{n}}\right]
,T_{B_{n+1}},...,T_{B_{2n-1}}\right] \\
&  =\left\{
\begin{array}
[c]{c}%
0\text{, \ \ \ \ \ \ \ \ \ \ \ \ \ \ \ \ \ \ \ \ \ \ \ for }n\text{
even\ \ \ }\\
n\left[  T_{A_{1}},...,T_{A_{2n-1}}\right]  \text{, for }n\text{ odd}%
\end{array}
\right\}
\end{align*}
or
\begin{align*}
&  \frac{1}{\left(  n-1\right)  !}\frac{1}{n!}%
%TCIMACRO{\dsum \limits_{\sigma\in S_{2n-1}}}%
%BeginExpansion
{\displaystyle\sum\limits_{\sigma\in S_{2n-1}}}
%EndExpansion
\left(  -1\right)  ^{\pi\left(  \sigma\right)  }\left[  \left[  T_{A_{\sigma
\left(  1\right)  }},...,T_{A_{\sigma\left(  n\right)  }}\right]
,T_{A_{\sigma\left(  n+1\right)  }},...,T_{A_{\sigma\left(  2n-1\right)  }%
}\right] \\
&  =\left\{
\begin{array}
[c]{c}%
0\text{, \ \ \ \ \ \ \ \ \ \ \ \ \ \ \ \ \ \ \ \ \ \ \ for }n\text{
even\ \ \ }\\
n\left[  T_{A_{1}},...,T_{A_{2n-1}}\right]  \text{, for }n\text{ odd}%
\end{array}
\right\}  \text{.}%
\end{align*}


\begin{thebibliography}{9}                                                                                                %


\bibitem {deazcarraga}J.A. de Azc\'{a}rraga, J.C. P\'{e}rez Bueno, Commun.
Math. Phys. 184 (1997) 669 $\left[  \text{arXiV:hep-th/9605213v3}\right]  .$

\bibitem {deazcarraga01}J.A. de Azcarraga, J.M. Izquierdo, J.C. Perez Bueno,
"Talk given at 6th Fall Workshop on Geometry and Physics, Salamanca, Spain,
22-24 Sep 1997. Published in Rev.R.Acad.Cien.Exactas Fis.Nat.Ser.A
Mat.95:225-248,2001.$\left[  \text{arXiV:physics/9605213v3}\right]  .$

\bibitem {deazcarraga0}J.A. de Azc\'{a}rraga and J.C. P\'{e}rez Bueno, "Talk
given at 21st International Colloquium on Group Theoretical Methods in
Physics, Goslar, Germany, 15-20 July, 1996.$\left[
\text{arXiV:hep-th/9611221}\right]  $

\bibitem {sexpansion}F. Izaurieta, E. Rodriguez, P. Salgado, J. Math. Phys. 47
(2006) 123512 $\left[  \text{arXiV:hep-th/0606215}\right]  .$

\bibitem {deazcarraga1}J.A. de Azcarraga, J.M. Izquierdo, M. Picon and O.
Varela, Nucl. Phys. B662 (2003) 662. $\left[  \text{arXiV:hep-th/0212347}%
\right]  .$

\bibitem {hatsuda}M. Hatsuda and M. Sakaguchi, Prog. Theor. Phys. 109, 853
(2003). $\left[  \text{arXiV:hep-th/0106114}\right]  .$

\bibitem {deazcarraga2}J.A. de Azcarraga, J.M. Izquierdo, M. Picon and O.
Varela, Int. J. Theor. Phys. 46 (2007) 2738. $\left[
\text{arXiV:hep-th/0703017v1}\right]  .$
\end{thebibliography}
\end{document}